\newtheorem{definition}{Definition}
\newtheorem{theorem}{Theorem}
\let\abs\relax  
\DeclarePairedDelimiter{\abs}{\vert}{\rvert}
\DeclarePairedDelimiter{\ceil}{\lceil}{\rceil}
\DeclarePairedDelimiter{\iverson}{\llbracket}{\rrbracket}
\newcommand{\iv}[1]{\, \iverson*{\,#1 \,} \,}
\providecommand\given{}
\newcommand\SetSymbol[1][]{%
\nonscript\:#1\vert
\allowbreak
\nonscript\:
\mathopen{}}
\DeclarePairedDelimiterX\Set[1]\{\}{%
\renewcommand\given{\SetSymbol[\delimsize]}
#1
}
\begin{document}

\title{Unconditional and exponentially large violation of classicality}

\author[1]{Marcello Benedetti\thanks{\href{mailto:marcello.benedetti@quantinuum.com}{marcello.benedetti@quantinuum.com}}}

\author[1]{Gabriel Marin-Sanchez}

\author[1]{Jordi Weggemans}

\author[1]{\authorcr Matthias Rosenkranz} 

\author[1,2]{Harry Buhrman\thanks{\href{mailto:harry.buhrman@quantinuum.com}{harry.buhrman@quantinuum.com}}}

\affil[1]{Quantinuum, London, United Kingdom}
\affil[2]{University of Amsterdam \& QuSoft, The Netherlands}

\date{November 17, 2025}

\maketitle

\vspace{-.5cm}

\begin{abstract}
Testing the predictions of quantum mechanics has been one of the main experimental endeavors for decades~\cite{Freedman_1972, Aspect_1982, Pan_1998, Storz_2023}. Recent advancements in technology led to a number of demonstrations~\cite{Arute_2019, Zhong_2020, Madsen_2022, DeCross_2025, Kretschmer_2025} which test non-classicality via specific computational tasks. Limitations of these experiments include dependence on complexity theory assumptions~\cite{Aaronson_2017}, susceptibility to hardware noise~\cite{Zlokapa_2023} and inefficient verification~\cite{Hangleiter_2019,StilckFranca_2022}, raising questions about their scalability. We propose to test non-classicality using a game based on complement sampling~\cite{benedetti2025complement}, an efficiently verifiable problem that achieves the largest possible separation between quantum and classical computation when both input and output represent samples from probability distributions. When restricting the input to instances inspired by the Bernstein-Vazirani problem~\cite{Bernstein_1997}, our game admits an exponentially large violation of classicality without relying on computational hardness assumptions. We execute the game on Quantinuum System Model H2 trapped-ion quantum computers, with experiments consisting of thousands of different circuits on up to 55 qubits. The observed scores can be explained by a systematic adoption of a quantum strategy, further corroborating the quantum nature of the hardware in an efficient and scalable way.
\end{abstract}

\paragraph{Introduction.} 

A prominent approach to validate quantum mechanics' predictions is by examining violations of Bell inequalities~\cite{Bell_1964}. Such experiments can be formulated as collaborative games with two players~\cite{Brunner_2014}, often called non-local games. At each round, the referee sends a question to each player. The players can only see their own question, to which they provide an answer. A referee evaluates a function of both answers and questions to decide whether the round has been won. Depending on the resources and strategies allowed (e.g. classical, quantum, and non-signaling) the output statistics attain different maximal values.  

Tests on quantum mechanics can also be designed around computational tasks, where complexity theory provides the foundation to build upon~\cite{Harrow_2017}. This approach has gained traction thanks to the recent technological progress towards some of the challenges of building quantum computers. However, since error-corrected large-scale quantum computers remain inaccessible at the moment, researchers have focused on random circuit sampling (RCS), a task designed for noisy hardware. Under noise models aligned with existing quantum technologies, numerical simulations suggest that achieving a polynomial quantum speedup in RCS is restricted, at best, to circuits of few hundred qubits and depths up to one hundred~\cite{Zlokapa_2023}. Beyond 50 qubits, even if the output of the circuit still carries a signal despite the noise, it becomes very hard to verify the results due to the classical computational hardness of this. RCS suffers from either exponential classical-time verification~\cite{StilckFranca_2022}, as it requires noiseless classical simulation of the circuits, or exponential sample complexity~\cite{Hangleiter_2019}, due to the output distributions having high min-entropy. Moreover RCS relies on an unproven complexity-theoretic assumption called the quantum threshold~\cite{Aaronson_2017}.

In this work we formulate a single-player game called the \emph{complement sampling game}. It arguably has all the desiderata for a test of quantum mechanics: i) it does not rely on unproven complexity-theoretic assumptions, ii) it is efficiently verifiable by classical means, iii) it has low hardware requirement, and iv) there exists a quantum strategy that is provably exponentially better than the optimal classical strategy. Moreover it can be shown that any strategy that wins with probability larger than one-half over many rounds is non-classical. This, together with the low hardware requirements, makes the test suitable for contemporary noisy hardware. 

One may suspect that the complement sampling game is related to the aforementioned Bell experiments and non-local games. But these typically involve multiple players measuring parts of a system that have interacted beforehand. Our game is different because it involves a single player who has access to the full system. So while non-local games test the power of entanglement and non-locality, our game tests the power of quantum superposition. Non-local games are also device-independent~\cite{Brunner_2014}, meaning that conclusions are drawn solely from the players' output statistics without any reference to the details of their hardware. Our complement sampling game is only partially device-independent: it does not depend on the details of the player's hardware, but it does require a correct preparation of the input. Hereafter, we refer to the party responsible for the preparation of the input as the referee. We now introduce the game and then demonstrate it under the assumption of a trusted referee. We defer the discussion of potential experimental loopholes to the end of the article.

\begin{figure}
\centering
\includegraphics[width=1.0\linewidth]{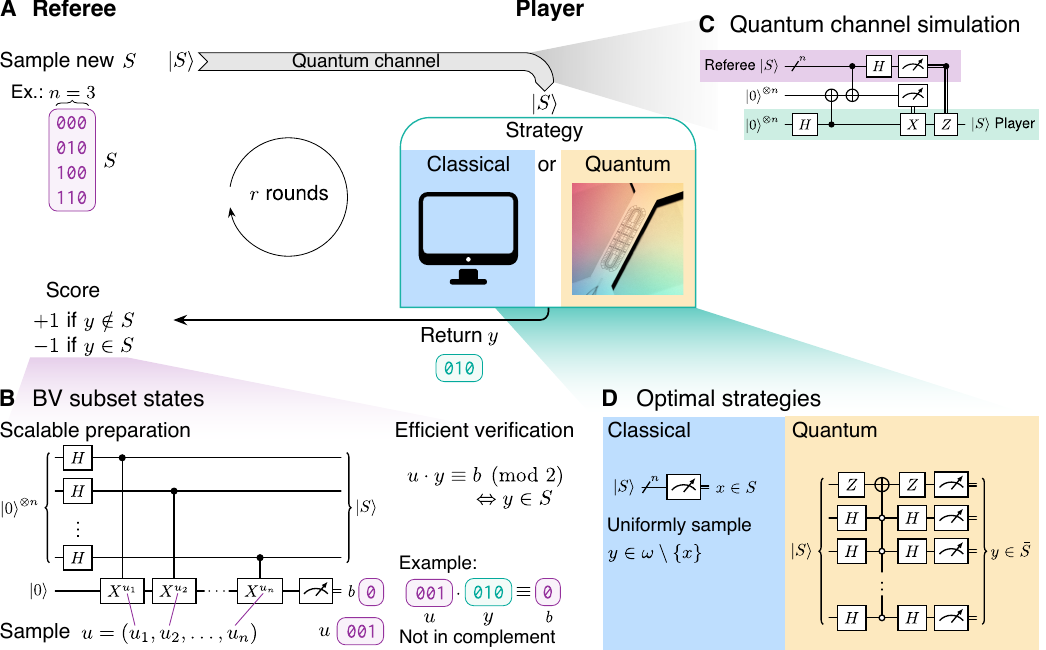}
\caption{\emph{Complement sampling game for unconditional violation of classicality}. (A) At each round, the referee samples a new set $S$ of cardinality $2^{n-1}$, prepares the subset state $\ket{S}$, and sends it to the player. The player chooses a classical or quantum strategy to obtain a sample $y$, supposedly from the complement $\bar{S}$, and sends it to the referee. The referee assigns a score $+1$ if the returned sample is from the complement and $-1$ otherwise. After a number of rounds the referee must determine from the recorded scores if the player used a classical or quantum strategy. (B) Quantum circuit used by the referee to prepare the $n$-qubit Bernstein-Vazirani (BV) subset state $\ket{S}$. Here, $S$ depends on random bit string $u$ and ancilla measurement outcome $b$. The referee verifies $y$ using their knowledge of $u$ and $b$. (C) Teleportation circuit simulating a quantum communication channel between referee and player. This uses $n$ ancillas. (D) The optimal classical strategy is to measure the incoming state in the computational basis and return a different bit string. The swapper circuit implements the unitary $2\dyad{+^n} - I$ and is used by the player adopting the optimal quantum strategy. Additional ancillas can be used to compile this circuit to lower-level gates.}
\label{fig:overview}
\end{figure}

\paragraph{The complement sampling game.} 

Consider the set $\omega = \{0,1\}^n$ of bit strings of length $n \in \mathbb{N}$. Given uniform sampling access to a non-empty subset $S \subset \omega$, complement sampling is the problem of returning any sample from the complement set $\bar{S} = \omega \setminus S$.
Given access to the \emph{subset state} $\ket{S} = \frac{1}{\sqrt{|S|}} \sum_{x \in S} \ket{x}$ corresponding to the uniform distribution over the elements of $S$, there is a simple way to solve this problem on a quantum computer~\cite{benedetti2025complement}. The idea is to swap the state to its complement $\ket{S} \rightarrow \ket{\bar{S}}$ and then measure it in the computational basis. Writing $\ket{+^n} = \sqrt{1 -|S|/2^n} \,\ket{\bar{S}} + \sqrt{|S|/2^n} \, \ket{S}$, it can be verified that the Grover diffusion operator $U = 2\dyad{+^n} - I$ achieves
\[
U \ket{S} = 2\sqrt{\frac{|S|}{2^n}\left(1-\frac{|S|}{2^n}\right)} \ket{\bar{S}} + \left(2\frac{|S|}{2^n} - 1\right) \ket{S}.
\]
When $|S| = 2^{n-1}$, i.e. the subset state contains exactly half of the elements, $U$ is a perfect \emph{swapper circuit}~\cite{benedetti2025complement,aaronson2020hardness} for $\ket{S}$ and $\ket{\bar{S}}$. Moreover, any classical algorithm accessing a single sample from $S$ succeeds with probability at most $\frac{2^{n-1}}{2^n -1}$, quickly approaching $1/2$ with increasing bit string length~\cite{benedetti2025complement}. Here we design a single-player game based on complement sampling.

We fix the number of bits $n$ and the number of rounds $r$. Let $\mathcal{P}(\omega)$ denote the power set and let $\mathcal{F} \subset \mathcal{P}(\omega)$ be a family of subsets of fixed cardinality $2^{n-1}$. At each round, the referee draws an instance uniformly at random from the family, $S \sim \mathrm{Unif}(\mathcal{F})$, prepares the subset state $\ket{S}$, and sends it to the player. The player uses any desired strategy to produce a candidate bit string $y$ supposedly from $\bar{S}$. Formally, a strategy is described by a conditional probability distribution $h(y \mid S)$. The candidate bit string is then sent to the referee, who verifies the result of the round. If the bit string belongs to the complement the player scores a point, otherwise they lose a point, i.e. the score is
\[
\sigma(S,y) =  \begin{cases} 
+1,& y\in \bar{S},\\
-1,& y\notin \bar{S}.
\end{cases}
\]

In order to quantify the value of a strategy under a specific family of subsets, we define the \emph{expected score} over the randomness of the input and output
\begin{equation}
\label{eq:bell_func_main}
    V(\mathcal{F}, h) = \mathbb{E}_{S \sim \mathrm{Unif}(\mathcal{F})}  \;\mathbb{E}_{y \sim h(y \mid S)} \left[\sigma\left(S, y\right)\right] .
\end{equation}
This quantity is akin to a Bell functional~\cite{Buhrman_2012} for our complement sampling game, and is derived in full generality in~\ref{app:proof_max_violation}. Note that the score function $\sigma$ has been chosen such that guessing $y$ at random achieves an expected score of zero. Thus the expected score of any strategy is interpreted as the improvement it provides over a completely uneducated guess. Now let $C$ be the set of all classical strategies. The optimal classical expected score is $V(\mathcal{F}, C) =  \max_{h \in C} \, V(\mathcal{F}, h)$. For $Q$ the set of all quantum strategies, the optimal expected score is $V(\mathcal{F}, Q) =  \max_{h \in Q} \, V(\mathcal{F}, h)$. We quantify the \emph{non-classicality} of quantum strategies for a specific family $\mathcal{F}$ with the ratio $V(\mathcal{F}, Q) \, / \, V(\mathcal{F}, C)$, viewed as a function of $n$.

To build some intuition, let us first consider a player who does not have access to a quantum computer. The player receives the state $\ket{S}$ via a quantum communication channel, but is not able to perform any coherent operation on it. All the player can do is to measure $\ket{S}$ in the computational basis to obtain a bit string from $S$. This is equivalent to directly receiving a random element of $S$ from the referee, without the need for a quantum communication channel. The player then returns to the referee any bit string different from the observed one. This educated guess may look like a very minor improvement over the uneducated guess. However, under suitable choices of family $\mathcal{F}$, this is provably the optimal classical strategy and achieves the previously stated success probability of $\frac{2^{n-1}}{2^n -1}$ at each round. Now let us consider a player who can coherently manipulate $\ket{S}$ to perform a quantum computation. The player implements the swapper circuit to obtain $\ket{\bar{S}}$ from $\ket{S}$. A measurement in the computational basis yields a bit string from the complement $\bar{S}$. Assuming a noiseless (error-free) quantum setup this strategy is optimal as it succeeds with probability one at each round.

The referee can draw conclusions about the player's strategy from a finite number of rounds $r$ and a suitable family $\mathcal{F}$.
We can use this fact to turn the complement sampling game into a test for quantum mechanics. Our null hypothesis is that the player adopts a strategy with expected score $\leq V(\mathcal{F}, C)$ in each round. The alternative hypothesis is that the player uses a non-classical strategy with expected score $> V(\mathcal{F}, C)$. Denote the score of each round $i = 1, \dots, r$ as an i.i.d. random variable $V_i$ with expectation $\mathbb{E}[V_i]$. We now run the game, collect the player's answers, and record the scores $\sigma\left(S^{(i)}, y^{(i)} \right)$. Let $t = \frac{1}{r} \sum_{i=1}^r \sigma\left(S^{(i)}, y^{(i)} \right) - V(\mathcal{F}, C)$ be the distance between the empirical score and the best possible expected score under the null hypothesis. If $t \leq 0$, we accept the null hypothesis. If $t >0$, by Hoeffding's inequality for $\pm1$ random variables
\begin{equation}
\label{eq:hoeffding}
\mathbb{P}\left( \frac{1}{r}\sum_{i=1}^r V_i - \mathbb{E}[V_i]  \geq t \right) \leq  e^{- r t^2 / 2} = \delta.
\end{equation}
Note that $\delta$ upper bounds the p-value, which is the probability of obtaining statistics at least as extreme as the one observed, assuming the null hypothesis is true. Allowing a type I error (incorrectly rejecting the null hypothesis) probability of $\alpha$ we thus reject the null hypothesis if $\delta \leq \alpha$.

\paragraph{Families of subsets and preparation of random subset states.}

The referee must be able to prepare random subset states and verify whether bit strings belong to them. A generic subset state, whose amplitudes are defined by an arbitrary indicator function, is inefficient to prepare due to exponential circuit depth~\cite{benedetti2025complement}. Ref.~\cite{benedetti2025complement} also observes that a uniformly random subset can be obtained from a uniformly random permutation $P$ as
\begin{align}
\label{eq:prp_subsets}
S_P = \Set*{x \given x = P(z) \text{ and the first bit of } z \in \omega \text{ is } 0 } .
\end{align}
This defines a family $\mathcal{F}_\mathrm{PRP}=\Set{S_P}$ containing $\abs{\mathcal{F}_\mathrm{PRP}} = \binom{2^n}{2^{n-1}}$ subsets. A reversible circuit for $P$ applied to the superposition $\ket{0} \otimes \ket{+^{n -1}}$ produces the subset state $\ket{S}$ on a quantum computer. The referee can efficiently verify that $y \in \bar{S}$ by checking whether the first bit of $P^{-1}(y)$ is $1$. When using ensembles of strong pseudorandom permutations (PRPs), complement sampling remains classically hard in the average case~\cite{benedetti2025complement}. This is remarkable because PRPs can be realized in polynomial circuit depth, allowing for a viable implementation of the game. A comprehensive exploration of this proposal is provided in~\ref{app:details_prp_analysis} where we report successful experiments on a variety of quantum hardware settings. 

Unfortunately, PRPs do not exist unconditionally as their existence is equivalent to that of one-way functions~\cite{katz2020introduction}. A one-way function can be calculated quickly, yet it is conjectured to be computationally intractable to invert. Although modern digital life depends on this conjecture, e.g. for secure communication and banking, the aim of this work is to test aspects of quantum mechanics without relying on computational hardness assumptions. We propose an alternative family of subsets that does not rely on any complexity-theoretical assumption. Let each pair of $u\in \omega \setminus\Set{0}^n$ and $b\in\Set{0,1}$ define the subset
\[
S_{u,b} = \Set*{ x \in \omega \given u \cdot x \equiv b \pmod 2 },
\]
of cardinality $2^{n-1}$, so that the family $\mathcal{F}_{\mathrm{BV}}=\Set{S_{u,b}}$ contains $\abs{\mathcal{F}_{\mathrm{BV}}}=2(2^n-1)$ subsets. The defining condition is also used in the Bernstein-Vazirani problem~\cite{Bernstein_1997}, where the task is to learn the hidden string $u$ from oracle evaluations of the condition. We refer to such subsets as BV subsets. Importantly, there exists a simple circuit to prepare $\ket{S_{u,0}}$ or $\ket{S_{u,1}}$ each with probability $1/2$, as shown in Fig.~\ref{fig:overview}B. The referee chooses $u$ uniformly at random, stores the value of $b$ obtained during state preparation, and can efficiently verify the player's candidate bit string $y$ by checking that $u \cdot y \equiv 1 - b \pmod 2$. 

In~\ref{app:proof_max_violation} we prove that the optimal classical expected score for BV subsets vanishes exponentially with the bit string length, $V(\mathcal{F}_{\mathrm{BV}}, C) = \frac{1}{2^n -1}$. In contrast, the ideal (error-free) quantum algorithm for complement sampling on these subsets gives $V(\mathcal{F}_{\mathrm{BV}}, Q) = 1$. It follows that the ideal complement sampling game based on BV subsets exhibits an unconditional, exponentially large violation of classicality quantified by
\[
\frac{V(\mathcal{F}_{\mathrm{BV}}, Q)}{V(\mathcal{F}_{\mathrm{BV}}, C)} = 2^n -1 .
\]

We emphasize that this result holds because each round uses one copy of a randomly selected BV subset state. In~\ref{app:upper_bound_BV_states} we show that had we provided $n$ copies per round, a classical strategy would win with high probability thus achieving a high score. However, a multi-copy version of the game based on the PRP subsets in Eq.~\eqref{eq:prp_subsets} would still exhibit a large \emph{conditional} violation of classicality. This is because a classical strategy cannot distinguish PRP subsets from random ones given access to any polynomial number of copies.

\paragraph{Experiments on quantum hardware.}

In an ideal quantum setup with no errors the quantum strategy succeeds with probability one at each round of the game. However, any strategy succeeding with probability at least $\frac{1}{2} + \frac{1}{\mathrm{poly}(n)}$ at each round over $r = \mathrm{poly}(n)$ rounds is provably non-classical.
It follows that our game is resilient to noise to some extent. With an imperfect quantum setup, the expected probability of success is unknown and could be much smaller than the ideal one. Thus, statistical testing becomes an important aspect of our experiment.
We opt for a game with a fixed number of rounds $r = 100$ ($r=500$ for our largest experiments) and choose a significance level $\alpha=0.01$.

We implement the game on Quantinuum System Model H2 trapped-ion quantum computers. These systems comprise a 56-qubit quantum charge-coupled device (QCCD) architecture where $^{171}\text{Yb}^+$ qubit ions and $^{138}\text{Ba}^+$ coolant ions are shuttled in a race-track trap for dynamic operations~\cite{DeCross_2025,Moses_2023}. Physical qubits are transported to four dedicated interaction zones where lasers execute high-fidelity quantum operations, enabling all-to-all connectivity and parallelized gate execution. 
At the time of all but the experiments for $n=37$, H2-2 achieved two-qubit gate infidelity of $(1.10 \pm 0.17) \times 10^{-3}$ and memory error per qubit and depth-1 circuit time of $(2.26 \pm 0.29) \times 10^{-4}$ measured via randomized benchmarking of all qubits~\cite{Moses_2023,Magesan_2011,Harper_2019,hseries_benchmarking}.
At the time of the experiments for $n=37$, H2-2 achieved two-qubit infidelity of $(8.30 \pm 0.48) \times 10^{-4}$ and memory error per qubit and depth-1 circuit time of $(1.20 \pm 0.20) \times 10^{-4}$, making it one of the highest fidelity quantum processors available. Yet due to these dominant sources of error we are limited to executing a few hundreds of gates before the signal will become indistinguishable from the classical score at our chosen significance level. We emphasize that we cannot use standard error mitigation techniques because they require copies of the output state~\cite{Cai_2023}. Copies are not considered in our current formulation of the game where each round requires the preparation of a different subset state $\ket{S}$. Instead we reduce gate error by optimizing the number of two-qubit gates using compilation tools.

In the absence of a true quantum communication channel connecting quantum computers, we choose to embed both referee's and player's qubit registers within a single quantum processor. The referee state preparation circuit requires one ancilla  (Fig.~\ref{fig:overview}B). The communication channel is simulated using the quantum teleportation protocol~\cite{Bennett_1993}. We prepare a set of $n$ Bell pairs so that $n$ additional qubits are in the referee's ancilla register and $n$ in the player's main register (Fig.~\ref{fig:overview}C). We utilize mid-circuit measurements and conditional gates to implement the corrections required by the teleportation protocol. Finally, we allow the player to use additional ancilla qubits for the implementation of the swapper circuit (Fig.~\ref{fig:overview}D). Our optimized implementation based on Ref.~\cite{Maslov_2016} requires $\ceil*{ \frac{n - 3}{2}}$ ancillas and achieves a linear scaling of the number of two-qubit gates, compared to the quadratic scaling of textbook implementations (see~\ref{app:details_circ_compilation} for details). In total our experiments with BV subsets use $n_\mathrm{qubits} = 3n + \ceil*{\frac{n - 3}{2}} + 1$ qubits. System Model H2 quantum computers have 56 qubits. This limits the bit string length to $n\leq 16$. To go beyond, we perform experiments without teleportation. These require $n_\mathrm{qubits} = n + \ceil*{\frac{n - 3}{2}} + 1$ qubits and are thus limited to $n \leq 37$. Importantly, our compilation respects the barriers between referee, teleportation and player circuits, as to minimize leakage of information.

In order to produce truly random bit strings $u$ for the preparation of BV subsets, we employ the Quantum Origin solution by Quantinuum. In Quantum Origin a two-source extractor combines the local randomness source and the outcomes of a Bell test to generate a seed. This seed is used in a seeded extractor which recursively distills near-uniform random bits from fresh blocks of data from the local source of randomness~\cite{Foreman_2023}. This solution provides random bits with at most $2^{-128}$ total variation distance from the uniform distribution. For our purposes, the Bell tests were performed on the H1-1 quantum computer.

\begin{table}
\centering
\begin{tabular}{c|c|c|c|c|c|c} 
$n$ & Teleportation & $n_\mathrm{qubits}$ & Rounds $r$ & $n_\mathrm{gates}$ avg & $n_\mathrm{gates}$ std & $\delta$ \\ 
\hline 
5   & yes & 17 & 100 & 29.56  & 1.16 & $1.28\times 10^{-18}$ \\ 
10  & yes & 35 & 100 & 72.16  & 1.50 & $1.67\times 10^{-17}$ \\ 
15  & yes & 52 & 100 & 114.95 & 1.87 & $4.66\times 10^{-7}$ \\ 
20  & no  & 30 & 100 & 117.46 & 2.08 & $1.28\times 10^{-9}$ \\ 
25  & no  & 37 & 100 & 149.83 & 2.46 & $1.34\times 10^{-6}$ \\ 
30  & no  & 45 & 100 & 182.16 & 2.60 & $3.35\times 10^{-4}$ \\ 
35  & no  & 52 & 100 & 214.69 & 2.95 & $7.32\times 10^{-4}$ \\ 
37  & no  & 55 & 500 & 227.61 & 3.11 & $9.94\times 10^{-5}$
\end{tabular}
\caption{\textit{Summary of the experimental settings.} Each row corresponds to a complement sampling game executed on the 56-qubit H2-2 quantum computer for $r$ rounds, where each round utilizes its own random subset $S_{u,b}$. This amounts to running 1200 different circuits and obtaining a single sample from each. 
The average number of native two-qubit gates $n_\mathrm{gates}$ (arbitrary angle ZZ rotations), their standard deviation, and p-value bound $\delta$ are estimated from the $r$ samples.
} 
\label{tab:experiment_setup_bv} 
\end{table}

\begin{figure}
\centering
\includegraphics[width=\linewidth]{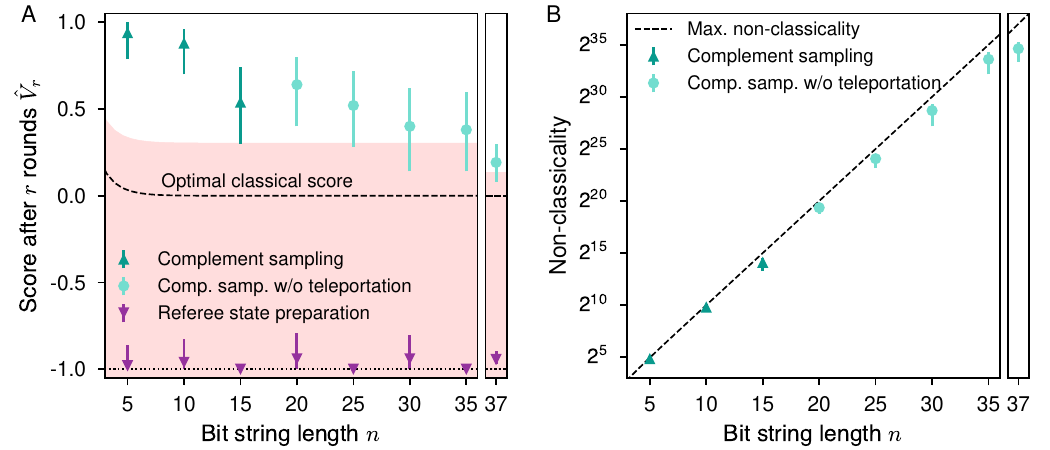}
\caption{\textit{Experimental results.} (A) Empirical score for the complement sampling game and state preparation by the referee executed on the H2-2 quantum computer. Outside the shaded region we reject the null hypothesis (that the player used any strategy with expected score at most the optimal classical score) with significance $\alpha=0.01$. (B) Violation of classicality quantified by the ratio of the expected scores for quantum vs classical strategies. Error bars correspond to 99\% confidence intervals. The $n=37$ data use $r=500$ rounds, all other data use $r=100$ rounds.}
\label{fig:experiment_results_bv}
\end{figure}

First we check if the referee prepares the correct BV subsets $S$ over $r$ rounds by sampling bit strings from the referee's quantum state. To this end, we measure the referee's register in the computational basis immediately after the state preparation step (Fig.~\ref{fig:overview}B). This produces bit strings $x^{(i)}$ that we use to compute the empirical score $\frac{1}{r} \sum_{i=1}^r \sigma(S^{(i)}, x^{(i)})$. We choose $r=100$ for $n\leq 35$ and $r=500$ for $n=37$. In the noiseless setup and if the referee prepared the correct subset state, $x\in S$ yielding an ideal expected score of $-1$. In Fig.~\ref{fig:experiment_results_bv}A we observe that the empirical scores of the referee state preparation are close to $-1$ at all tested sizes. Confidence intervals are computed with the bias-corrected and accelerated bootstrap method implemented in \texttt{SciPy} using 9,999 resamples and a confidence level of $99\%$~\cite{efronIntroductionBootstrap1994}. The results provide strong evidence that the referee can prepare a state with the correct output distribution in the computational basis.

In our complement sampling experiments, the player implements the quantum strategy in each round. We estimate the expected score of the player's strategy, Eq.~\eqref{eq:bell_func_main}, from $r$ rounds of the game as $\hat{V}_r = \frac{1}{r} \sum_{i=1}^r \sigma(S^{(i)}, y^{(i)})$. Our experimental setup is reported in Table~\ref{tab:experiment_setup_bv}. Figure~\ref{fig:experiment_results_bv}A shows the observed scores. As expected, $\hat{V}_r$ decreases with the number of qubits and gates but remains significantly above the optimal classical score $V(\mathcal{F}_\mathrm{BV}, C)$. We compute their difference $t = \hat{V}_r - V(\mathcal{F}_\mathrm{BV}, C)$ and report the p-value upper bound $\delta$, Eq.~\eqref{eq:hoeffding}, in Table~\ref{tab:experiment_setup_bv}. Since $\delta \leq \alpha = 0.01$, we reject the null hypothesis that the data came from a strategy with expected score $\leq V(\mathcal{F}_\mathrm{BV}, C)$ in each round (this includes all classical strategies) with significance $0.01$, corroborating the non-classical nature of the quantum device. The shaded area in Fig.~\ref{fig:experiment_results_bv}A shows the region where we cannot reject the null hypothesis with significance $\alpha$. We note that confidence intervals around several observed scores overlap with the non-rejection region. This indicates that experimental reruns could result in scores we cannot reject with significance $0.01$. A simple way to improve confidence in rejecting the classical strategy is to run the game for more rounds. This is illustrated for $n=37$, where we ran the game for $r=500$ rounds resulting in a tighter non-rejection region and confidence interval (all other data used $r=100$). We also repeated the $n=37$ experiment $10$ times with $500$ rounds each and the bit strings $u$ from the main experiment (not shown). All $10$ estimated scores fell within the $n=37$ confidence interval in Fig.~\ref{fig:experiment_results_bv}A indicating that the bootstrapped confidence interval is a valid measure for the variability of the score estimate. It is interesting to note that $n=20$ performs better than $n=15$ thanks to a simpler circuit (w/o teleportation). Figure~\ref{fig:experiment_results_bv}B shows the ratio $\hat{V}_r \, / \, V(\mathcal{F}_{\mathrm{BV}}, C)$ versus bit string length, i.e. the non-classicality observed in our experiments. We observe non-classicality that grows exponentially with the bit string length, closely tracking the non-classicality of the optimal quantum strategy. Deviations from the optimal strategy due to noise become clear for large number of two-qubit gates, e.g. at $n=37$. 

\paragraph{Discussion.}

We proposed to test the quantum mechanical nature of hardware via a game of complement sampling. In the idealized game, conclusions are drawn solely from the analysis of the player's output statistics without having to specify the inner working of their device and measurements. This is the hallmark of device-independent protocols. In our case, however, the device-independence is only partial because we still need to trust the referee and their device for the preparation of the input. This fact does not limit the scope of the test, as we envision, for example, a single trusted referee testing a large number of players. In a forthcoming paper we prove a rigidity result where winning the game with good probability implies that the input was close to the required one. It is therefore possible to exchange the roles: if we can trust the player, then we can put the referee to the test. This feature is not utilized in our current study, but we highlight it here as a potential direction for future research.

We implemented the game of complement sampling on Quantinuum H2 quantum computers. The statistics generated by our experiments exhibit exponentially large violations of classicality (of the order of $2^n$ up to $n=37$ on H2-2). These can be explained by a systematic adoption of a near-optimal quantum strategy. This result is a further substantiation of the quantum nature of the hardware used, and offers a different perspective than that of non-local games, quantum volume ($2^{25}$ for H2-2~\cite{hseries_quantumvolume}) and application-oriented benchmarks. In particular, our test demonstrates the power of quantum superposition in a manner that is oblivious of entanglement and non-locality. It is important to note that our current demonstration has several loopholes that shall be resolved in future work. For example, we used quantum teleportation to simulate a communication channel, yet both referee and player manipulated the subset states on the same H2 computer. We also used the H1 computer and Quantum Origin to generate random bits for the referee. To avoid any information leakage, the referee should generate both random numbers and subset states using their own computer. The subset state should then be sent to the player's own computer over an actual quantum communication channel. Moreover, we verified the referee via computational basis measurements on their quantum state and this is insufficient to establish the trust required by the idealized game.  

The interested reader can find more experiments and analyses in~\ref{app:details_prp_analysis}. There we utilize reversible circuits for approximate $k$-wise independent pseudorandom permutations (PRPs)~\cite{gay2025pseudorandomness} to construct the subset states for our game. The approximation error can, in principle, be exploited to improve a classical strategy, while hardware noise penalizes the quantum strategy more due to the larger number of two-qubit gates. Nevertheless, these secondary results provide further evidence of large violations of classicality on the H2 quantum computers, albeit conditional on the existence of one-way functions. Although PRP subsets do not enjoy the nice properties of the Bernstein-Vazirani (BV) subsets used in our main experiments, we expect them to become useful in a variation of the game where the referee provides $k$ copies of the subset state to the player at each round. While this poses a threat to the security of the hidden bit string defining the BV subsets (see~\ref{app:upper_bound_BV_states}), it does not affect the security of $k$-wise independent PRPs because they are indistinguishable from truly random permutations up to the $k$-th moment.

How can we scale the tests and experiments presented in this work to larger sizes? Two-qubit gate fidelities of future hardware are constrained by the physical limits of the underlying technology. The next generation of quantum hardware may thus enable experiments on slightly larger instances without requiring any modifications. To go beyond that, fault-tolerance becomes important and quantum error correcting codes must be considered. These require the compilation of our circuits in terms of a suitable, reduced gate set such as Clifford + T. Our preparation of BV subset states requires only CX and Hadamard gates, which are Clifford gates with a cheap transversal implementation in some codes, such as color codes~\cite{Bombin_2006}. On the other hand, our swapper circuit is essentially a multi-qubit Toffoli gate, requiring T gates expensively implemented by, e.g. magic state distillation~\cite{Bravyi_2005}. A recent result~\cite{gosset2025multiqubit} gives an approximate implementation of the $n$-qubit Toffoli gate where the number of T gates is independent of $n$, presenting a potential avenue for implementations of the complement sampling game on error-corrected architectures. 

Finally, we leave open the question of whether there exists a game with super-exponential violations of classicality.

\paragraph{Acknowledgments.} We thank Nick Van Duyn, Cameron Foreman and Duncan Jones for their support with the Quantum Origin random number generator. We also thank David Amaro, Cameron Foreman and Konstantinos Meichanetzidis for helpful discussions. 

\paragraph{Data availability.} The data and code that support the findings of this study are available at Zenodo~\cite{zenodo_complement_sampling2025}.

\bibliography{main}

\clearpage

\setcounter{figure}{0}
\setcounter{table}{0}
\makeatletter
\renewcommand{\fnum@figure}{Supplementary Figure \thefigure}
\renewcommand{\fnum@table}{Supplementary Table \thetable}
\makeatother
\renewcommand{\thesection}{Supplementary Note \arabic{section}}
\titleformat{\section}[block]{\normalfont\large\bfseries}{\thesection:}{.5em}{}{}
\renewcommand{\cfttoctitlefont}{\normalfont\large\bfseries}
\renewcommand{\cftsecfont}{\normalfont}
\renewcommand{\cftsecpagefont}{\normalfont}
\renewcommand{\cftsecleader}{\cftdotfill{\cftdotsep}}
\renewcommand{\cftsecaftersnum}{:} 
\addtolength{\cftsecnumwidth}{98pt} 

\begin{center}
\LARGE Supplementary Information for \\
``Unconditional and exponentially large violation of classicality''\\
\vspace{15px}
\large Marcello Benedetti, Gabriel Marin-Sanchez, Jordi Weggemans,\\
Matthias Rosenkranz, and Harry Buhrman
\end{center}

\vspace{15px}

\tableofcontents

\vspace{15px}

\section{Bell functional and violation of classicality}
\label{app:proof_max_violation}

The complement sampling game is a single-player game that does not rely on entanglement as a resource. Regardless, it is insightful to define the game analogously to non-local games, and equip it with Bell-like inequalities. We define a Bell functional~\cite{Buhrman_2012} for the game as the expected score over the randomness of input and output, i.e. $\mathbb{E}_\text{input} \mathbb{E}_{\text{output}\mid \text{input}}[\text{score}]$. Importantly, the score function is chosen so that if the player were to play completely at random, the expected score would be zero. This allows us to compare strategies in terms of their advantage relative to random guessing. 

\paragraph{General complement sampling functional.}
Let $\omega$ be a non-empty and finite universe, and $\mathcal{P}(\omega)$ be its power-set. The referee draws an instance $S \in \mathcal{P}(\omega) \setminus \Set{ \varnothing, \omega }$ with probability $p(S)$ and provides the player with the subset state $\ket{S} = \frac{1}{\sqrt{|S|}} \sum_{x \in S} \ket{x}$ as input. The player adopts any strategy they want and outputs a bit string in $\omega$. We define a strategy as a function $h: \Sigma^n \to \Gamma^n$, from the set of quantum states over $n$ qubits, $\Sigma^n$, to the set of probability vectors over $n$ bits, $\Gamma^n$. We use $h(y \mid S)$ to indicate the probability of output $y$ given input $\ket{S}$. The Bell functional value of this game is the expected score
\begin{equation}
\label{eq:bell_func_supp}
    V(p, h) = \sum_{S \in \mathcal{P}(\omega) \setminus \Set{ \varnothing, \omega }} p(S) \; \sum_{y \in \omega} h(y \mid S) \;\sigma\left(S, y \right) ,
\end{equation}
where the $\pm 1$ score is 
\[
    \sigma(S,y) = \begin{cases} 
    +1,& y\in \bar{S},\\
    -1,& y\notin \bar{S}.
    \end{cases}
\]
Equivalently, $\sigma(S,y)=2 \iv{y\in \bar{S}} - 1$ where $\iv{ \cdot }$ is the Iverson bracket. For the purpose of showing an exponentially large violation of classicality of our game, we can restrict $p$ to be the uniform distribution over a family of subsets $\mathcal{F} \subset P(\omega)$. We denote the corresponding value of the strategy with $V(\mathcal{F}, h)$. Given a set of strategies $H$, we define its value as
\[
    V(\mathcal{F}, H) = \sup_{h \in H} V(\mathcal{F}, h).
\]
We are interested in computing the ratio
\[
    \frac{V(\mathcal{F}, Q)}{V(\mathcal{F}, C)} ,
\]
for specific families $\mathcal{F}$, quantum strategies $Q$, and classical strategies $C$. A quantum strategy can manipulate $\ket{S}$ coherently, and perform arbitrary measurements to produce the output distribution. In contrast, a classical strategy begins with a measurement of $\ket{S}$ in the computational basis, then processes the result on a classical computer. In this latter case, the input model is equivalent to directly receiving one bit string sampled uniformly from $S$.

\paragraph{Special case: Bernstein-Vazirani subset states.}
Let us specialize the above definitions  to Bernstein-Vazirani (BV) subsets:
\begin{align*}
    &\omega = \{0,1\}^n \qquad &&|\omega| = 2^n \qquad \\
    &\mathcal{F}_{\mathrm{BV}} = \left\{ S_{u,b} \mid u\in \omega \setminus\{0\}^n \text{ and } b\in\{0,1\} \right\} \qquad &&|\mathcal{F}_{\mathrm{BV}}| = 2(2^n -1) \\
    &S_{u,b} = \left\{ x \in \omega \mid u \cdot x \equiv b \pmod 2 \right\} \qquad && |S_{u,b}| = 2^{n -1} .   
\end{align*}
We let the input follow a uniform probability distribution over $\mathcal{F}_{\mathrm{BV}}$
\begin{align}
\label{eq:subset_distribution}
p(S) = \begin{cases} 
    \frac{1}{2(2^n -1)} , & S \in \mathcal{F}_{\mathrm{BV}},\\
    0, & \text{otherwise} .
\end{cases}
\end{align}

\paragraph{Quantum value.}
Since any BV subset contains half the elements of the universe, a simple application of the swapper circuit gives the complement subset state $\ket{\bar{S}} = (2\dyad{+^n} - I) \ket{S}$~\cite{benedetti2025complement}. A measurement in the computational basis gives a bit string from the complement. This quantum strategy implements the conditional probability distribution
\begin{align}
\label{eq:quantum_strategy}
h(y  \mid S) = \begin{cases} 
    \frac{1}{2^{n-1}} , & y \in \bar{S} \\
    0, & \text{otherwise} .
    \end{cases}
\end{align}
It is easy to verify that in the quantum case plugging in Eqs.~\eqref{eq:subset_distribution} and~\eqref{eq:quantum_strategy} in Eq.~\eqref{eq:bell_func_supp} gives an expected score of one, which is optimal. Thus for the set of all quantum strategies $Q$ we have  $V(\mathcal{F}_{\mathrm{BV}}, Q) = 1$. 

\paragraph{Classical value.} 
The classical strategy begins by measuring the subset state in the computational basis, thus reading off a bit string $x$ with Born probability  
\begin{align}
\label{eq:born_probability}
     h(x \mid S) = |\braket{x}{S}|^2 = \begin{cases} 
        \frac{1}{2^{n-1}} , & x \in S \\
        0, & \text{otherwise} .
        \end{cases}
\end{align}
Then the strategy outputs a bit string $y$ with probability $h(y \mid x)$. For example, consider a function $s$ that outputs a particular bit string $s(x) \neq x$ (i.e. it does not have any fixed point). Such function leads to a deterministic output 
\begin{align}
\label{eq:deterministic_step}
h(y \mid x) = \iv{y = s(x)} .
\end{align}
The two steps just described implement
\begin{align}
    h(y \mid S) &= \sum_{x \in \omega} h(y \mid x) \, h(x \mid S) \notag \\
    &= \frac{1}{2^{n-1}} \sum_{x \in S} \iv{y = s(x)}, \label{eq:classical_strategy}
\end{align}
where we marginalize over the measurement outcomes and use conditional independence $h(y|x,S) = h(y|x)$ since, under the assumption of a trusted referee, no information about the subset could have leaked. We now analyze this classical strategy, and later we will justify why it is essentially the optimal one.
Plugging Eqs.~\eqref{eq:subset_distribution} and~\eqref{eq:classical_strategy} in Eq.~\eqref{eq:bell_func_supp} we obtain the expected score
\begin{align*}
    V(\mathcal{F}_{\mathrm{BV}}, h) &= \sum_{S \in \mathcal{F}_{\mathrm{BV}}} \frac{1}{2(2^n -1)} \, \sum_{y \in \omega} \frac{1}{2^{n-1}} \, \sum_{x \in S} \iv{y = s(x)} \sigma\left(S, y \right) \\
    &= \frac{1}{(2^n -1) 2^{n}} \sum_{S \in \mathcal{F}_{\mathrm{BV}}} \, \sum_{x \in S}  \sigma\left(S, s(x) \right) \\
    &=\frac{1}{(2^n -1) 2^{n}} \sum_{u \in \omega \setminus \{0\}^n} \, \sum_{ b \in \{0,1\}} \, \sum_{x \in S_{u,b}}  \sigma\left(S_{u,b}, s(x) \right) \\
    &=\frac{1}{(2^n -1) 2^{n}} \sum_{u \in \omega \setminus \{0\}^n} \, \sum_{x \in \omega}  \sigma\left(S_{u,u\cdot x}, s(x) \right) .
\end{align*}
In the third line we express the subsets in terms of their parameters $u$ and $b$. In the fourth line we use that for fixed $u$, any $x$ is either in $S_{u,0}$ or in $S_{u,1}$ depending on the result of $u\cdot x \pmod 2$. Next we rewrite the score as  
\[
    \sigma\left(S_{u,u \cdot x}, s(x) \right) = 2 \iv{u\cdot s(x)\neq u\cdot x} - 1 ,
\]
so that 
\begin{align}
\label{eq:intermediate_step}
    V(\mathcal{F}_{\mathrm{BV}}, h) = \Bigg(\frac{2}{(2^n -1) 2^{n}} \sum_{x \in \omega} \sum_{u \in \omega \setminus \{0\}^n} \iv{u \cdot s(x)\neq u\cdot x}\Bigg) - 1.
\end{align}
We now count how many $u$'s satisfy the condition $u\cdot s(x)\neq u\cdot x$ or, equivalently, $u\cdot (s(x) \oplus x) =1 $. Recall that we assumed a map without fixed points, i.e. $s(x) \neq x$ for all $x$. Letting $d(x):=s(x)\oplus x$, our assumption gives $d(x) \neq \{0\}^n$ for all $x$. For fixed $x$, exactly half of the possible bit strings $u$ satisfy $u \cdot d(x)=1$. Therefore each $x$ contributes $2^{n}/2$ to the summation in Eq.~\eqref{eq:intermediate_step}. We arrive at
\[
    V(\mathcal{F}_{\mathrm{BV}}, h) = \Bigg(\frac{2}{(2^n -1) 2^{n}} \sum_{x\in \omega} \frac{2^{n}}{2}\Bigg) - 1 = \frac{2^n}{2^n -1} - 1 = \frac{1}{2^n - 1}.
\]
We remark that: i) any fixed-point-free map $s$ (e.g.\ $s(x) = x\oplus e_1$ for a fixed nonzero $e_1$) attains this value; ii) maps with fixed points $s(x) = x$ can only perform worse since $x$ is always sampled from $S$ following Eq.~\eqref{eq:born_probability}; and iii) randomization cannot improve the value since each $x$ contributes the same amount regardless of which $s(x)\neq x$ is chosen. We conclude that the optimal expected score over classical strategies is
\begin{align}
      V(\mathcal{F}_{\mathrm{BV}}, C) = \max_{h \in C} V(\mathcal{F}_{\mathrm{BV}}, h) = \frac{1}{2^n - 1} 
\label{eq:quantum_score_BV_states}
\end{align}
and the maximum is attained by the classical strategy described above.

\paragraph{Quantum-classical ratio.}
Using the above results for BV subset states we arrive at the ratio of the expected scores of quantum and classical strategies
\[
    \boxed{ \, \frac{V(\mathcal{F}_{\mathrm{BV}}, Q)}{V(\mathcal{F}_{\mathrm{BV}}, C)} = 2^n - 1 . \,}
\]
Our complement sampling game gives an unconditional and exponentially large violation of classicality.

\section{Classical sample complexity bounds for Bernstein-Vazirani subsets}
\label{app:upper_bound_BV_states}

For the subset state family $\mathcal{F}_{\mathrm{BV}}$, we will show in this Supplementary Note that the classical randomized sample complexity to solve complement sampling is $\Theta(n)$.

\paragraph{Upper bound.} The key idea of our upper bound procedure is that, for any fixed $x_0 \in S$ and any uniformly random $x_j \in S$, the vector $d_j = x_j \oplus x_0$ is uniformly random in the $(n-1)$-dimensional subspace $V = \Set{v \in \mathbb{F}_2^n \given v \cdot u \equiv b \pmod 2}$. If the matrix $D$ whose rows are the vectors $\Set{d_j}$ spans $V$, then $\ker (D) = \operatorname{span} \Set{u}$. Thus $u$ is the unique nonzero null vector and it can be found by Gaussian elimination. Once we have found $u$, we can use $x_0$ to check whether $b = 0$ or $b = 1$, and then sample uniformly random bit strings $y$ until we find one that satisfies $y \cdot u \equiv 1 - b \pmod 2$. This last step requires only two attempts in expectation and does not cost new input samples (we sample from $\Set{0,1}^n$ and not from $S$). Returning $y$ solves complement sampling with $\mathcal{O}(n)$ sample complexity.

\begin{theorem}
Fix any $S \in \mathcal{F}_{\mathrm{BV}}$. For any $\epsilon>0$, there is a classical procedure that uses
\[
m = n +\bigl\lceil \log (1/\epsilon) \bigr\rceil
\]
i.i.d.\ samples from $S$ (with replacement) and recovers the hidden $u \in \Set{0,1}^n \setminus \Set{0}^n$ with probability at least $1-\epsilon$. The runtime is polynomial in $n$ and $\log(1/\epsilon)$.
\end{theorem}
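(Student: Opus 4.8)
The plan is to make precise the two-step procedure sketched in the preceding paragraph: first reduce the $m$ samples from $S$ to a family of uniform random vectors in the hyperplane $u^{\perp}=\Set{v\in\mathbb{F}_2^n\given u\cdot v\equiv 0\pmod 2}$, and then bound the probability that these vectors span $u^{\perp}$, in which case $u$ is recovered as the unique nonzero vector they all annihilate. Throughout we may assume $0<\epsilon\le 1$, so that $m=n+\lceil\log(1/\epsilon)\rceil\ge n$.

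First I would fix the reduction. Write the i.i.d.\ samples as $x_0,x_1,\dots,x_{m-1}$ and set $d_j=x_j\oplus x_0$ for $j=1,\dots,m-1$. Since $x_0,x_j\in S=S_{u,b}$ we have $u\cdot x_0\equiv u\cdot x_j\equiv b\pmod 2$, hence $u\cdot d_j\equiv 0$, so every $d_j$ lies in the $(n-1)$-dimensional space $u^{\perp}$. For the distribution: condition on the realized value of $x_0$; then $S$ equals the coset $x_0\oplus u^{\perp}$, so each remaining $x_j$ is uniform on this coset, independently across $j$, and therefore the $d_j=x_j\oplus x_0$ are i.i.d.\ uniform on $u^{\perp}$. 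This law does not depend on $x_0$, so we may drop the conditioning and treat the $d_j$ as $m-1$ i.i.d.\ uniform vectors in $u^{\perp}\cong\mathbb{F}_2^{n-1}$.

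Next I would invoke the standard count of full-rank matrices over $\mathbb{F}_2$. The rows $d_1,\dots,d_{m-1}$ span $u^{\perp}$ iff the corresponding $(m-1)\times(n-1)$ coordinate matrix has full column rank $n-1$, and a uniformly random matrix of that shape has full column rank with probability $\prod_{i=0}^{n-2}\bigl(1-2^{\,i-(m-1)}\bigr)$. Hence, writing $D$ for the matrix with rows $d_1,\dots,d_{m-1}$,
\[
\mathbb{P}\bigl[\,\text{rows of }D\text{ do not span }u^{\perp}\,\bigr]\;\le\;\sum_{i=0}^{n-2}2^{\,i-(m-1)}\;<\;2^{\,(n-1)-(m-1)}\;=\;2^{-\lceil\log(1/\epsilon)\rceil}\;\le\;\epsilon ,
\]
using the Weierstrass bound $1-\prod_i(1-a_i)\le\sum_i a_i$, then $m-1\ge n-1$, and finally $m=n+\lceil\log(1/\epsilon)\rceil$. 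On the complementary event $\operatorname{rank}(D)=n-1$, so $\ker(D)$ is one-dimensional; since $d_j\cdot u\equiv 0$ for all $j$ we have $0\neq u\in\ker(D)$, whence $\ker(D)=\operatorname{span}\Set{u}$ and Gaussian elimination on $D$ returns $u$ exactly. Sampling together with Gaussian elimination on an $(m-1)\times n$ matrix over $\mathbb{F}_2$ runs in time $\operatorname{poly}(n,\log(1/\epsilon))$, which yields the claimed runtime and completes the argument.

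The only real content here is the span probability; everything else is bookkeeping. The place to be careful is the reduction step: one must check that conditioning on $x_0$ does not destroy the uniformity or independence of the $d_j$ (it does not, because $S$ is a coset of $u^{\perp}$ and the remaining samples are drawn independently of $x_0$), and that a single extra sample beyond the $n-1$ ``generic'' ones simultaneously absorbs the dimension mismatch (only $m-1$ of the $m$ samples enter $D$, while $\dim u^{\perp}=n-1$) and the error budget $\epsilon$. I do not anticipate any further difficulty.
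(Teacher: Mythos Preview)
Your proposal is correct and follows essentially the same approach as the paper: both form the differences $d_j=x_j\oplus x_0$, argue they are i.i.d.\ uniform in the hyperplane $u^\perp$, and recover $u$ by Gaussian elimination when $\operatorname{rank}(D)=n-1$. The only cosmetic difference is in bounding $\mathbb{P}[\operatorname{rank}(D)<n-1]$: you invoke the exact full-rank probability $\prod_{i=0}^{n-2}(1-2^{i-(m-1)})$ and the Weierstrass inequality, while the paper applies a union bound over the $2^{n-1}-1$ hyperplanes of $u^\perp$; both routes land on the identical estimate $(2^{n-1}-1)\,2^{-(m-1)}<2^{n-m}\le\epsilon$.
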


\begin{proof}
Let $u \in \Set{0,1}^n \setminus \Set{0}^n$ and $b \in \Set{0,1}$ be such that
\[
S = \Set{x \in \Set{0,1}^n \given x \cdot u \equiv b \pmod 2}.
\]
Repeatedly measuring different copies of $\ket{S}$ in the computational basis is equivalent to sampling $x_0, x_1, \ldots, x_{m-1}$ i.i.d.~uniformly from $S$ with replacement. For $j \in \Set{1,\dots,m-1}$, set
$d_j := x_j \oplus x_0$ and form the $(m-1) \times n$ matrix $D$ whose rows are $d_1,\dots,d_{m-1}$. For any $x_0, x \in S$ we have $(x_0 \oplus x)\cdot u \equiv 0 \pmod 2 $.
Hence each $d_j$ lies in the subspace
\[
V = \{ v \in \mathbb{F}_2^n \mid v \cdot u \equiv 0 \pmod 2 \}, \qquad \dim V = n-1.
\]
Because the map $x \mapsto x_0 \oplus x$ is a bijection and the $x_j$ are i.i.d.\ uniform in $S$, the $d_j$ are i.i.d.~uniform in $V$.
Thus $\operatorname{rowspan}(D) \subseteq V$, so $\rank(D) \leq n-1$. By the rank--nullity theorem,
\[
\rank(D) + \operatorname{nullity}(D) = n.
\]
If $\rank(D) = n-1$, then $\operatorname{nullity}(D) = 1$, so $\ker(D)$ is the one-dimensional space spanned by $u$; consequently the unique nonzero solution to $D u' = 0$ is $u$. Gaussian elimination finds it in time polynomial in $n$ and $\log(1/\epsilon)$. It remains to show that $\mathbb{P}(\rank(D) = n-1) \geq 1-\epsilon$.
If the rows fail to span $V$, then all $d_j$ lie in some proper $(n-2)$-dimensional subspace $W \subset V$.
For any fixed $W$, we have
\[
\mathbb{P}(d_j \in W) = \frac{|W|}{|V|} = \frac{2^{n-2}}{2^{n-1}} = \tfrac{1}{2},
\]
independently across $j$, hence $\mathbb{P}(\forall j,\ d_j \in W) = 2^{-m+1}$.
The number of $(n-2)$-dimensional subspaces of a $(n-1)$-dimensional $\mathbb{F}_2$-space is given by the Gaussian binomial
\[
\binom{n-1}{n-2}_2 = \binom{n-1}{1}_2 = \frac{2^{n-1}-1}{2-1} = 2^{n-1} - 1.
\]
By a union bound,
\[
\mathbb{P}(\rank(D) < n-1 )
\leq (2^{n-1} - 1) \, 2^{-m+1}
< 2^{n-m}
\leq \epsilon,
\]
for our choice of $m$.
\end{proof}

\paragraph{Lower bound.}
Next, we will show that  the bound is tight in $n$, i.e., $\Omega(n)$ samples are necessary. We follow a similar proof as in~\cite{benedetti2025complement}.

\begin{theorem}
Fix $n\geq 1$. Any classical algorithm that, given $q$ samples from $S\in\mathcal{F}_{\mathrm{BV}}$, outputs some $y\in\bar S$ has success probability at most
\[
\frac{1}{2} + \frac{1}{2\bigl(2^{ n-q+1}-1\bigr)}\qquad\text{for }1 \leq q \leq n.
\]
\end{theorem}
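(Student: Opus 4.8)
The plan is to run a Bayesian argument, conditioning on the samples the algorithm actually observes. Throughout, the success probability is understood over the uniformly random $S=S_{u,b}\in\mathcal{F}_{\mathrm{BV}}$, the $q$ samples, and the algorithm's coins. As noted above, a classical algorithm that receives $q$ copies of $\ket{S}$ and measures is equivalent to one that receives $q$ i.i.d.\ samples $x_1,\dots,x_q$ drawn uniformly (with replacement) from $S=S_{u,b}$; its guess is then some function $y=y(x_1,\dots,x_q)$, possibly randomized through internal coins $\rho$. I would fix a realization of $(x_1,\dots,x_q,\rho)$ — so that $y$ is a fixed bit string in $\Set{0,1}^n$ — bound the probability that $y\in\bar S$ conditioned on this realization, and then average the resulting (uniform) bound over $(x_1,\dots,x_q,\rho)$. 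Since the coins are independent of $(u,b)$ and of the samples, conditioning on them is harmless.

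The first step is to compute the posterior of the hidden pair $(u,b)$ given the observed $x_1,\dots,x_q$. The prior on $(u,b)$ is uniform on $\mathcal{F}_{\mathrm{BV}}$, and the likelihood of observing any fixed tuple $(x_1,\dots,x_q)$ equals $2^{-(n-1)q}$ if every $x_i\in S_{u,b}$ and $0$ otherwise (because $\abs{S_{u,b}}=2^{n-1}$ for all pairs); hence by Bayes' rule the posterior is uniform over the \emph{consistent} pairs $\Set{(u,b) \given x_i\cdot u\equiv b \pmod 2 \text{ for all } i}$. Let $U=\operatorname{span}_{\mathbb{F}_2}\Set{x_i\oplus x_1 \given 2\le i\le q}$ and $k=\dim U$. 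The consistency condition is equivalent to requiring $u\in U^{\perp}$ together with $b\equiv x_1\cdot u\pmod 2$, so the consistent pairs are in bijection with $U^{\perp}\setminus\Set{0}^n$ and there are exactly $2^{n-k}-1$ of them. Note $k\le q-1$ (since $U$ is spanned by $q-1$ vectors) and $k\le n-1$ (since the true pair is itself consistent), so $2^{n-k}-1\ge 1$.

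Next, with the realization fixed, set $e:=y\oplus x_1$. The algorithm wins exactly when $u\cdot y\equiv 1-b\pmod 2$, which, using $b\equiv x_1\cdot u$, is the condition $u\cdot e\equiv 1\pmod 2$. Hence the conditional win probability equals $N/(2^{n-k}-1)$, where $N$ is the number of $u\in U^{\perp}\setminus\Set{0}^n$ with $u\cdot e\equiv 1\pmod 2$. Using $(U^{\perp})^{\perp}=U$ in $\mathbb{F}_2^n$: if $e\in U$ then $u\cdot e\equiv 0$ for every $u\in U^{\perp}$, so $N=0$; if $e\notin U$ then $u\mapsto u\cdot e$ is a nonzero linear functional on $U^{\perp}$, splitting it exactly in half, so $N=2^{n-k-1}$, since the excluded zero vector has $0\cdot e=0\ne 1$. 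In either case the conditional win probability is at most
\[
\frac{2^{n-k-1}}{2^{n-k}-1}=\frac12+\frac{1}{2\bigl(2^{n-k}-1\bigr)}\le\frac12+\frac{1}{2\bigl(2^{n-q+1}-1\bigr)},
\]
where the last step uses $k\le q-1$ and is valid precisely for $q\le n$, which guarantees $2^{n-q+1}-1\ge 1>0$. Averaging this bound over $(x_1,\dots,x_q,\rho)$ gives the claim.

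The argument is elementary linear algebra over $\mathbb{F}_2$; the two points that need care are (i) checking that the posterior is \emph{exactly} uniform over $U^{\perp}\setminus\Set{0}^n$ rather than reweighted — which holds because all subsets in $\mathcal{F}_{\mathrm{BV}}$ have the same cardinality $2^{n-1}$, making the likelihood constant across consistent pairs — and (ii) the dichotomy $e\in U$ versus $e\notin U$, which is where $(U^{\perp})^{\perp}=U$ enters. Everything else reduces to the remark that linear dependence among the $q-1$ difference vectors only lowers $k$, which never helps the adversary beyond the $k=q-1$ case; indeed the bound is attained when these differences are linearly independent and $y$ is picked with $y\oplus x_1\notin U$, so it is tight.
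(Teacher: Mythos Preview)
Your proof is correct and follows essentially the same route as the paper: compute the posterior over $(u,b)$ given the observed samples, note it is uniform over an affine set indexed by a nonzero $u$ in a subspace of $\mathbb{F}_2^n$, and bound the win probability by observing that $u\mapsto u\cdot(y\oplus x_1)$ is a linear functional that is either identically zero or splits the subspace in half. The only cosmetic differences are that you work directly in the sampling model with a Bayesian conditioning argument (rather than invoking Yao's principle and the stronger index-query oracle), and you parametrize by $k=\dim\operatorname{span}\{x_i\oplus x_1\}$ instead of $r=\dim\operatorname{span}\{x_i\}$; both lead to the same final bound via $k\le q-1$.
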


\begin{proof}
Just as in~\cite{benedetti2025complement}, we assume query access to $S$ by an oracle which on input $i$ returns the $i$-th element of $S$, assuming arbitrary (and unknown) ordering of its elements. By Yao’s minimax principle and since the index-query model only strengthens classical access relative to sampling~\cite{benedetti2025complement}, it suffices to bound the success of any deterministic index-query algorithm against the hard distribution where $(u,b)$ is uniform over $\Set{0,1}^n\setminus\Set{0}^n \times\Set{0, 1}$ and $S=S_{u,b}$.
Let $x_1,\dots,x_q$ be any $q$ distinct elements returned by the oracle. Let $W=\operatorname{span}\Set{x_1,\dots,x_q} $ with $r=\dim W \leq q$. The constraints $x_i\in S$ mean that $u\cdot x_i=b$ for all $i$. The set of feasible pairs $(u,b)$ consistent with the observed $x_1,\dots,x_q$ is a linear subspace of $\mathbb{F}_2^{n}\times\mathbb{F}_2$ of dimension
\[
d = n-r+1,
\]
since we have an $(n+1)$-dimensional space with $r$ independent linear constraints. Under the uniform prior on $(u,b)$ with $u\neq \{0\}^n$, conditioning on $S$ containing $x_1,\dots,x_q$ yields a uniform posterior over the remaining feasible pairs. This follows from a similar argument as in~\cite{benedetti2025complement}: for any fixed subset $T \subseteq \Set{0,1}^n$ of rank $r$, the number of pairs $(u,b)$ with $T \subseteq S$ is the same. Their number is
\[
N_{\mathrm{rem}}  = 2^{ d}-1  = 2^{ n-r+1}-1.
\]
Equivalently, the feasible $u$’s form a $d$-dimensional linear subspace in $\mathbb{F}_2^n$. Write $U$ for the associated $d$-dimensional linear subspace. Conditioned on the constraints being satisfied, $u$ is uniform over $U\setminus\Set{0}^n$ and $b$ is fixed by $b \equiv u\cdot x_i \pmod{2}$. Let the algorithm output any $y\notin\Set{x_1,\dots,x_q}$ such that there exists an $u'\in U\setminus\Set{0}^n$ with $u'\cdot y \equiv 1 - b(u') \pmod{2} $, where $b(u') \equiv u'\cdot x_1  \pmod{2}$. We may assume without loss of generality that the algorithm chooses such a $y$, since any $y$ with $u'\cdot y \equiv b(u') \pmod{2}$ for all $u'\in U$ would make the success probability zero. For such a choice of $y$, the map $\phi_y: U\to\{0,1\}$ defined as
\[
\phi_y(u') = u'\cdot(y+x_1) \pmod{2} ,
\]
is a nonzero linear functional on the $d$-dimensional space $U$. Note that $\phi_y(u') = u' \cdot y + b(u') \equiv 1 \pmod{2}$ if and only if $u' \cdot y \equiv 1- b(u') \pmod{2}$, which means that $\phi_y(u') = 1$ if and only if the algorithm succeeds for $(u',b(u'))$. Note that $\phi_y(u')$ takes each value in $\{0,1\}$ exactly $2^{d-1}$ times as $u'$ ranges over $U$. Excluding $u'=\{0\}^n$ removes at most one element and never changes the count for value $1$, so among the $N_{\mathrm{rem}}=2^d-1$ feasible inputs, exactly $2^{d-1}$ yield $\phi_y(u') = 1$. Therefore
\[
\mathbb{P}(\text{success}\mid x_1,\dots,x_q)\le\frac{2^{d-1}}{2^d-1},
\]
where the success probability is with respect to the distribution over inputs. Since this is monotonically decreasing in $d$, this is minimized when $r=q$, i.e., the queried elements are linearly independent, giving $d=n-q+1$. Since this holds for the constraints given any elements $x_1,\dots,x_q$ and for any output $y$ as specified above, the following final bound on the success probability holds:
\begin{align}
\frac{2^{ n-q}}{2^{ n-q+1}-1} = \frac{1}{2}+\frac{1}{2\bigl(2^{ n-q+1}-1\bigr)}.
\label{eq:success_prob_lb_BV_states}
\end{align}
\end{proof}
Therefore, to achieve \emph{constant} success probability $p > \tfrac{1}{2}$, we require $q = \Omega(n)$. Moreover, note that for $q=1$, we recover Eq.~\eqref{eq:quantum_score_BV_states} when we map Eq.~\eqref{eq:success_prob_lb_BV_states} to the $\pm1$ score.

\section{Analysis of the game with pseudorandom permutations subsets}
\label{app:details_prp_analysis}

As discussed in the main text, a subset state can be generated by applying a permutation $P$ to $\ket{0} \otimes \ket{+^{n-1}}$. However, a reversible circuit implementation has exponential depth for general $P$. The original complement sampling game~\cite{benedetti2025complement} proposes to use strong pseudorandom permutations (PRPs) instead. Strong PRPs typically have polynomial circuit depth and are indistinguishable from truly random permutations in polynomial time, even when given access to their inverse. The following result shows that complement sampling with these subsets remains hard in the average case for any classical strategy.

\begin{theorem}[restatement of Thm. 7 in Ref.~\cite{benedetti2025complement}]
    For some $n \in \mathbb{N}$, let $P : \Set{0,1}^n \times \Set{0,1}^{\lambda} \rightarrow \Set{0,1}^n$ be a strong pseudorandom permutation with security parameter $\lambda = n$. Given a key $k \in \Set{0,1}^{\lambda}$, let $S$ be the image of $\Set{0ik \given i \in \Set{0,1}^{n-1} }$ under $P$, and $\bar{S}$ be the image of $\Set{1ik \given i\in \Set{0,1}^{n-1} }$ under $P$. Let $O_{\textup{index}} : [2^{n-1}] \rightarrow \Set{0,1}^{n}$ be the oracle that on input $i$ returns the $i$-th element in $S$.  Picking a key uniformly at random, for all polynomial-time algorithms $A$ that make a polynomial number of queries to $O_{\textup{index}}$ it must hold that
    \begin{align*}
        \mathbb{P} \left[A^{O_{\textup{index}}} \textup{ outputs a } y \in \bar{S}\right] \leq \frac{1}{2} + \textup{negl}(n).
    \end{align*}
\label{thm:hardness_PRP_S}
\end{theorem}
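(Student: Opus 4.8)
The plan is to combine the standard PRP-to-random-permutation switch with a short combinatorial (Bayesian) argument: first replace the strong PRP $P(\cdot,k)$ by a truly random permutation $\Pi$, incurring only a negligible change in $A$'s success probability, and then bound the success probability in this idealized world exactly as in the Bernstein--Vazirani lower bound.

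\textbf{Step 1 (switch to a random permutation).} Let $p_{\mathrm{PRP}}$ and $p_{\mathrm{rand}}$ denote the probability that $A^{O_{\mathrm{index}}}$ outputs some $y \in \bar S$ when $S$ is built from $P(\cdot,k)$ with a uniformly random key $k$, respectively from a uniformly random permutation $\Pi$. I would build a strong-PRP distinguisher $D$ that, given oracle access to a permutation $\mathcal O$ and its inverse $\mathcal O^{-1}$, picks the key $k$ itself, fixes the indexing $j \mapsto \mathcal O(0\,\mathrm{bin}(j))$ of $S = \mathcal O(\{0i : i \in \{0,1\}^{n-1}\})$, answers each query $O_{\mathrm{index}}(j)$ of $A$ with a single forward query $\mathcal O(0\,\mathrm{bin}(j))$, and on $A$'s output $y$ makes one inverse query and accepts iff the first bit of $\mathcal O^{-1}(y)$ is $1$ (equivalently $y \in \bar S$). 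Then $D$ runs in polynomial time with polynomially many oracle queries, and its acceptance probability is $p_{\mathrm{PRP}}$ when $\mathcal O = P(\cdot,k)$ and $p_{\mathrm{rand}}$ when $\mathcal O = \Pi$, so strong-PRP security gives $\lvert p_{\mathrm{PRP}} - p_{\mathrm{rand}}\rvert \le \mathrm{negl}(n)$. This is precisely where the \emph{strong} hypothesis is needed: checking membership in the exponentially large set $\bar S$ efficiently requires inverting the permutation.

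\textbf{Step 2 (random-permutation world).} Here $S$ is a uniformly random subset of $\{0,1\}^n$ of size $2^{n-1}$ and $\bar S$ is its complement. Fixing $A$'s internal coins in the worst case, after $q = \mathrm{poly}(n)$ index queries $A$ has observed a set $X$ of at most $q$ distinct elements, all known to lie in $S$, and conditioning on $X$ the posterior over $S$ is uniform over the subsets of $\{0,1\}^n$ of size $2^{n-1}$ that contain $X$; this is the same posterior-uniformity argument used in the BV lower bound, applied verbatim. Hence if $A$'s output $y$ lies in $X$ it fails, while if $y \notin X$ then, of the $2^n - q$ not-yet-revealed elements, exactly $2^{n-1}$ lie in $\bar S$, so
\[
p_{\mathrm{rand}} \le \frac{2^{n-1}}{2^n - q} = \frac{1}{2} + \frac{q}{2\,(2^n - q)} = \frac{1}{2} + \mathrm{negl}(n),
\]
using $q = \mathrm{poly}(n)$.

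Putting the two steps together yields $p_{\mathrm{PRP}} \le p_{\mathrm{rand}} + \mathrm{negl}(n) \le \tfrac{1}{2} + \mathrm{negl}(n)$, which is the claim. I expect the only point requiring care to be the bookkeeping in Step 1 --- verifying that simulating $O_{\mathrm{index}}$ and performing the final membership test each cost only a single (forward or inverse) oracle call, so that $D$ is a bona fide polynomial-time strong-PRP distinguisher with a polynomial number of queries; the combinatorics of Step 2 is routine once the posterior-uniformity claim is imported from the BV analysis.
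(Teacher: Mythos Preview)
The paper does not actually prove this theorem: it is stated in Supplementary Note~3 solely as a restatement of Theorem~7 of Ref.~\cite{benedetti2025complement}, with no accompanying argument, so there is no in-paper proof to compare your proposal against.

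On its own merits, your two-step plan is the standard and correct one: a hybrid argument swapping $P(\cdot,k)$ for a truly random permutation $\Pi$ via a strong-PRP distinguisher (using the inverse oracle to test $y\in\bar S$, which is exactly why the \emph{strong} hypothesis is needed), followed by a posterior-uniformity computation in the random-permutation world yielding $p_{\mathrm{rand}}\le 2^{n-1}/(2^n-q)$. One phrase should be cleaned up: in the PRP security game the key is sampled by the challenger and hidden from the distinguisher, so ``picks the key $k$ itself'' is incorrect --- $D$ neither knows nor chooses $k$, and it does not need to, because the oracle $\mathcal O$ already encapsulates $P(\cdot,k)$; simply delete that clause. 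Also, the posterior-uniformity you invoke is for a uniformly random $2^{n-1}$-subset, which is not literally the BV posterior argument in Supplementary Note~2 (that one is over pairs $(u,b)$); the claim you need here follows directly from the exchangeability of a random permutation conditioned on the revealed values, and is arguably simpler than the BV version rather than a corollary of it.
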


Luby and Rackoff~\cite{luby1988how} showed that strong PRPs can be constructed from PRPs. PRPs used in cryptographic protocols such as Advanced Encryption Standard (AES) and Data Encryption Standard (DES) could be used for our purpose. However, they are made of rigid circuit constructions (the block ciphers) operating on a fixed number of bits (the block size), which would complicate the scaling analysis. To explore more flexible constructions we instead consider ensembles of $\epsilon$-approximate $k$-wise independent PRPs. When applied to any fixed initial bit string, these PRPs produce distributions of bit strings that are indistinguishable from the uniform distribution up to the $k$-th moment and up to total variation distance $\epsilon$. 

\begin{definition}[restatement of Def.~1 in Ref.~\cite{gay2025pseudorandomness}]
\label{def:kwise_prp}
A distribution $P$ on permutations of $\Set{0, 1}^n$ is said to be $\epsilon$-approximate $k$-wise independent if for all distinct $x(1),\dots, x(k) \in \Set{0, 1}^n$, the distribution of $(g(x(1)),\dots, g(x(k)))$ for
$g \sim P$ has total variation distance at most $\epsilon$ from the uniform distribution on $k$-tuples of distinct strings from $\Set{0, 1}^n$
\end{definition}

In the same paper, the authors provide circuit constructions based on layers of random reversible operations.

\begin{theorem}[restatement of Thm.~3 in Ref.~\cite{gay2025pseudorandomness}]
\label{thm:prp_theorem}
A random one-dimensional brickwork circuit of three-bit gates and depth $n \cdot \tilde{\mathcal{O}}(k^2)$ computes a permutation that is $2^{-nk}$-approximate $k$-wise independent.
\end{theorem}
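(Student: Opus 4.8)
The plan is to analyze the Markov chain that the brickwork circuit induces on $k$-tuples, and bound its mixing time. Let $X$ be the set of ordered $k$-tuples of distinct strings in $\Set{0,1}^n$. One brickwork layer — a row of independent, uniformly random reversible gates on the disjoint triples of wires fixed by the brick pattern — sends a tuple $(x(1),\dots,x(k))\in X$ to a random tuple in $X$, and iterating over $L = n\cdot\tilde{\mathcal{O}}(k^2)$ layers defines a Markov chain whose one-layer transition operator $T$ is symmetric: a uniformly random reversible 3-bit gate is equidistributed with its inverse, so $T(\vec x,\vec y)=T(\vec y,\vec x)$, whence $T$ is doubly stochastic with the uniform law $\pi$ on $X$ as its unique stationary distribution, and irreducible because 3-bit gates generate a group acting transitively on $X$. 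Definition~\ref{def:kwise_prp} asks precisely that, for every fixed starting tuple, the law after $L$ layers be within total variation $2^{-nk}$ of $\pi$, so it suffices to prove $\lVert T^{L}(\vec x,\cdot)-\pi\rVert_{\mathrm{TV}}\le 2^{-nk}$ for all $\vec x\in X$.

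The first reduction is to group the $L$ layers into blocks of depth $\Theta(n)$. The point of the depth-$\Theta(n)$ window is geometric: after that much depth the light cone of every wire covers all $n$ positions, so by the comparison method (Diaconis--Saloff-Coste) the Dirichlet form of one block of the 1D brickwork chain dominates, up to a congestion/path-length factor polynomial in $n$ (and $k$), that of a single step of the much more symmetric \emph{non-local} chain, in which the random 3-bit gate acts on a uniformly random triple of the $n$ wires. This isolates the two real tasks: (a) a clean spectral or coupling analysis of the non-local chain on $X$; and (b) careful accounting of the $n$- and $k$-dependence lost in the comparison. A naive execution of (a) — bound the non-local spectral gap, then convert an $\ell^2$ estimate into total variation using $\abs{X}\le 2^{nk}$ — already gives depth of order $n^2\,\mathrm{poly}(k)$ for error $2^{-nk}$, so the sharp statement requires a more careful route.

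The heart of the argument is the local-to-global step, following the template developed for approximate unitary designs (Brand\~{a}o--Harrow--Horodecki, Haferkamp) and, in the reversible setting, by Gowers and by He--O'Donnell. The local input is a finite linear-algebra fact, independent of $n$: a single uniformly random reversible gate on three fixed wires, viewed as a Markov operator on the three relevant coordinate-columns of the $k$ strings, contracts with a \emph{constant} spectral gap on the orthocomplement of the functions that ignore those columns. This local gap must then be promoted to a global mixing bound. One route is a Nachtergaele-style martingale / finite-size-criterion argument certifying a global gap of order $1/\mathrm{poly}(n,k)$; the route I would actually pursue is a direct path coupling on $X$, which yields an exponentially small coupling-failure probability without paying the ambient dimension. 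Concretely, I would design the coupling of two runs started from $\vec x$ and $\vec y$ to proceed position by position: over a block of $\tilde{\mathcal{O}}(k^2)$ layers it forces the two tuples to agree on one further of the $n$ bit-columns while preserving agreement on those already coupled, except with small probability; after $n$ blocks the tuples agree entirely, and summing the per-block failure probabilities bounds the total variation distance by $2^{-nk}$. This simultaneously explains the depth $n\cdot\tilde{\mathcal{O}}(k^2)$ and the error $2^{-nk}$.

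The step I expect to be the main obstacle is exactly this promotion: extracting from the constant single-gate gap a global bound with the \emph{correct} polynomial trade-off rather than the $n^2\,\mathrm{poly}(k)$ that crude comparison or a single-scale martingale argument delivers. Making it work seems to require either an argument whose error contracts fast enough that only $\tilde{\mathcal{O}}(1)$ blocks past the mixing threshold are needed, or a coupling tight enough that a depth-$\Theta(n)$ block behaves like $\tilde{\mathcal{O}}(k^2)$ non-local gates \emph{together with} the full exponentially small failure tail, and getting the column-by-column coupling to preserve earlier agreement is the delicate part. Minor nuisances — the boundary bricks of the 1D pattern, and the fact that the brick partition alternates between two offsets so that a single layer is not translation-invariant on the wires — are absorbed by working at the level of depth-$\Theta(n)$ blocks rather than individual layers.
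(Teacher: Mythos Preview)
The paper does not prove this theorem. It is explicitly labeled as a restatement of Theorem~3 from Ref.~\cite{gay2025pseudorandomness} and is invoked purely as a black box to justify the circuit constructions in Supplementary Table~\ref{tab:prp_constructions} and the numerical analysis in Supplementary Figure~\ref{fig:prp_analysis}. There is therefore no ``paper's own proof'' to compare your proposal against.

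As for the proposal itself: what you have written is not a proof but a proof \emph{strategy}, and you say so yourself. You correctly identify that the question reduces to mixing of the induced Markov chain on ordered $k$-tuples, you name the right circle of techniques (comparison to the non-local chain, Nachtergaele-type martingale bounds, path coupling, the Gowers and He--O'Donnell work on random reversible circuits, the BHH/Haferkamp template from unitary designs), and you correctly diagnose that the crude route gives $n^2\,\mathrm{poly}(k)$ rather than $n\cdot\tilde{\mathcal{O}}(k^2)$. But the column-by-column coupling you ``would actually pursue''---forcing agreement on one bit-column per block of $\tilde{\mathcal{O}}(k^2)$ layers while preserving prior agreement, with per-block failure probability small enough that the union bound over $n$ blocks lands below $2^{-nk}$---is asserted, not carried out, and you explicitly flag the preservation step as ``the delicate part'' and the local-to-global promotion as ``the main obstacle.'' A reader cannot verify from your text that the per-block failure probability is actually of order $2^{-k}$ (or whatever is needed), nor that the coupling can be made to respect previously coupled columns under the alternating brick pattern. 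So this is a reasonable roadmap toward the result in \cite{gay2025pseudorandomness}, but it is not itself a proof, and since the present paper does not attempt one either, there is nothing further to compare.
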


More generally, the idea is to build a reversible circuit from layers of random three-bit operations, so that inputs and outputs become increasingly uncorrelated as the number of layers grows. To explore variations of this construction, we define three circuit parameters: layer type, operation type, and number of layers.
The \emph{layer type} prescribes the location of the operations. We consider either placing them in a one-dimensional nearest-neighbors brickwork (NN), or as a layout of random triplet (RT). More specifically, NN is a sequence of $\frac{n}{3}$ three-bits operations starting from the first qubit $(1,2,3), (4,5,6), \dots, (n-2,n-1,n)$, followed by another sequence starting from the second qubit $(2,3,4), (5,6,7), \dots, (n-1,n,1)$. In RT we proceed similarly except that each triplet is selected at random. For example, we could have $(1, 6, 8), (5, 2, 9), (3, 4, 7)$, followed by $(3, 7, 8), (1, 5, 6), (9, 2, 4)$. The RT construction is motivated by the full-connectivity of Quantinuum H-series devices and is similar in spirit to the construction used for random circuit sampling in Ref.~\cite{DeCross_2025}. The \emph{operation type} can be any gate set which is universal for reversible classical computation. We consider DES2 gates~\cite{Even_1983}, which are also used in the proof of Theorem~\ref{thm:prp_theorem}. DES2 gates act on the three bits as $(b_1, b_2, b_3) \rightarrow g(b_1, b_2, b_3 \oplus f(b_1,b_2))$ for any of the $2^{2^2} = 16$ possible binary functions $f$, and any of the $3! = 6$ possible reordering $g$ of the output bits. Thus there are $96$ possible instantiations of a DES2 gate on three bits. In addition, we consider S8 gates that act as arbitrary permutations of the eight basis states of three bits. S8 is a superset of DES2 and there exists $8! = 40320$ possible instantiations on three bits. Finally, we consider either a constant \emph{number of layers} $L=1$, or a number of layers that scales with the number of bits $L=n$. We will refer to our PRP constructions using the name convention: layer type - operation type - number of layers. For example, NN-DES2-$n$ consists of $n$ layers of nearest-neighbors DES2 gates. 

\begin{table}
\centering
\small
\begin{tabular}{c|c|c|c|c}
Layer type  & Operation type & Number of layers & Average 2-qubit gates & Worst-case 2-qubit gates \\
\hline
NN & DES2 & $1$ & $2n$      & $10n/3$ \\
NN & DES2 & $n$ & $2n^2$    & $10n^2/3$ \\
NN & S8   & $1$ & $20n/3$   & $12 n$ \\
NN & S8   & $n$ & $20n^2/3$ & $12 n^2$ \\
RT & DES2 & $1$ & $2n$      & $10n/3$ \\
RT & DES2 & $n$ & $2n^2$    & $10n^2/3$ \\
RT & S8   & $1$ & $20n/3$   & $12 n$ \\
RT & S8   & $n$ & $20n^2/3$ & $12 n^2$ \\
\end{tabular}
\caption{\textit{Pseudorandom permutations constructions.}}
\label{tab:prp_constructions}
\end{table}

Supplementary Table~\ref{tab:prp_constructions} shows the constructions considered in this study and reports the average- and worst-case scaling of number of two-qubit gates. These are estimated by recalling that each layer contains $2n/3$ operations and by decomposing each operation to quantum gates as explained in~\ref{app:details_circ_compilation}. 
The estimates are validated in Supplementary Fig.~\ref{fig:prp_analysis}(b) where we take $n$ to be a multiple of 3 and report the average number of two-qubit gates of 100 random instances. Each circuit is compiled for the Quantinuum H-series native gate set $R_{ZZ}(\theta) = e^{-i \frac{\theta}{2} Z\otimes Z}$ and $U(\theta, \phi) = e^{-i\frac{\theta}{2} (\cos(\phi)X + \sin(\phi)Y)}$ using optimization level 3 of the \texttt{pytket} software package~\cite{Sivarajah_2021}. The number of gates increases linearly for single layer constructions and quadratically for $n$ layer constructions, as expected. We note a significant overhead for implementing S8 (blue lines) compared to DES2 gates (red lines), while the layer type did not have a significant impact.

Next we numerically verify that our PRP constructions are $\epsilon$-approximate $k$-wise independent where, following Theorem~\ref{thm:prp_theorem}, we choose $\epsilon = 2^{-kn}$. The idea is to sample random instances, apply them to the all-zero bit string, and measure in the computational basis. We then compare the distribution of output bit strings $p_i$ to the uniform distribution in terms of total variation distance $\mathrm{TVD} = \frac{1}{2}\sum_{i=1}^{2^n} \abs{p_i - \frac{1}{2^n}}$. A sample-optimal algorithm for this task is given in Ref.~\cite{diakonikolas2017optimal}. It computes the random variable $S = \frac{1}{2} \sum_{i=1}^{2^n} \abs{ \tilde{p}_i - \frac{1}{2^n} }$ where $\tilde{p}_i$ are empirical probabilities estimated from the $M$ samples. If $S$ is larger than a specific threshold, the algorithm outputs `NO', otherwise it returns `YES'. The authors prove the following theorem.

\begin{theorem}[restatement of Thm. 3 in Ref.~\cite{diakonikolas2017optimal}]
\label{thm:tvd_samples}
Let $p$ be an unknown distribution over $2^n$ elements. There exists a universal constant $C>0$ such that the following holds. Given $M 
\geq \frac{C}{\epsilon^2}(\sqrt{2^n \log \frac{1}{\delta}} + \log \frac{1}{\delta})$ samples from $p$, there exists an algorithm that outputs `YES' with probability $1-\delta$ if $p$ is the uniform distribution, and `NO' with probability $1-\delta$ if the total variation distance to the uniform distribution is $\geq \epsilon$. 
\end{theorem}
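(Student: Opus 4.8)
The plan is to prove the (upper-bound) claim via the standard $\ell_2^2$/chi-squared statistic combined with Poissonization and a Bernstein-type tail bound; this is the route that yields the sample-optimal, two-term dependence on the failure probability $\delta$. Write $k=2^n$ for the domain size and let $u$ be the uniform distribution on the $k$ elements. First I would Poissonize the sample size: draw $M'\sim\mathrm{Poisson}(M)$ samples, so that the coordinate counts $N_1,\dots,N_k$ are mutually independent with $N_i\sim\mathrm{Poisson}(Mp_i)$; this decouples the summands and makes all moment computations exact, and it is undone at the end by a routine de-Poissonization (condition on $M'\geq M/2$, whose failure probability is $e^{-\Omega(M)}\leq\delta$ in the stated regime). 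On these counts define $Z=\sum_{i=1}^{k}\bigl((N_i-M/k)^2-N_i\bigr)$. Using $\mathbb{E}[N_i(N_i-1)]=(Mp_i)^2$ one gets $\mathbb{E}[Z]=M^2\sum_i(p_i-1/k)^2=M^2\|p-u\|_2^2$, and a short Poisson-moment computation gives the exact per-coordinate variances $\mathrm{Var}(z_i)=2M^2p_i^2+4M^3p_i(p_i-1/k)^2$. The tester outputs \texttt{YES} iff $Z<\tau$ for a threshold $\tau$ placed strictly between the two cases.

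Next I would separate the hypotheses. Under $p=u$ we have $\mathbb{E}[Z]=0$ and $\mathrm{Var}(Z)=2M^2/k$; under $\|p-u\|_{\mathrm{TV}}\geq\epsilon$, Cauchy--Schwarz on the $k$-point domain gives $\|p-u\|_2\geq\|p-u\|_1/\sqrt{k}=2\|p-u\|_{\mathrm{TV}}/\sqrt{k}\geq 2\epsilon/\sqrt{k}$, so $\mathbb{E}[Z]\geq 4M^2\epsilon^2/k$. Choosing $\tau=2M^2\epsilon^2/k$ leaves a gap that is $\Omega(M^2\epsilon^2/k)$ and at least $\tfrac12\mathbb{E}[Z]$, so it remains to show $Z$ stays within this gap of its mean with probability $1-\delta$ in each case. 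Since $Z=\sum_i z_i$ is a sum of independent random variables after Poissonization, and each $z_i$ is a low-degree polynomial of a Poisson variable — hence sub-exponential with a scale parameter of order $\max(Mp_i,1)$ — I would multiply the individual moment generating functions and apply a Bernstein inequality $\mathbb{P}(|Z-\mathbb{E}Z|\geq t)\leq 2\exp\!\bigl(-c\min(t^2/V,\,t/b)\bigr)$, with variance proxy $V=O(M^2/k+M^3\|p-u\|_2^2/k)$ on the near-flat instances that govern the hard case, and per-coordinate scale $b=O\bigl(1+\max_i Mp_i\bigr)$. Plugging $t=\Theta(M^2\epsilon^2/k)$: the sub-Gaussian regime $t^2/V\gtrsim\log(1/\delta)$ forces $M\gtrsim\sqrt{k\log(1/\delta)}/\epsilon^2$, while the sub-exponential regime $t/b\gtrsim\log(1/\delta)$ forces $M\gtrsim\log(1/\delta)/\epsilon^2$, and together these give exactly the claimed sample size, establishing both completeness (\texttt{YES} on the uniform distribution) and soundness (\texttt{NO} when $\|p-u\|_{\mathrm{TV}}\geq\epsilon$) with error at most $\delta$. (The empirical $\ell_1$ deviation $S=\tfrac12\sum_i|\tilde p_i-1/k|$ described in the main text is a convenient summary of this kind of test; the sample-optimal analysis itself goes through the $\ell_2$ statistic above, using the same $\ell_1$-versus-$\ell_2$ comparison on the $k$-point domain.)

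The hard part will be the variance and tail control when $p$ has heavy atoms: a coordinate with large $p_i$ inflates both the Bernstein scale $b$ and the proxy $V$, and one must argue this is always compensated by a correspondingly larger $\mathbb{E}[Z]$ so that the \emph{relative} gap does not collapse — handled either by bucketing coordinates according to the size of $p_i$ and treating spiky deviations separately (those are in fact detectable with far fewer samples), or by a direct moment-generating-function argument for the whole sum that never needs a union bound over the $k$ coordinates (a union bound would cost an extra $\log k=\Theta(n)$ inside the logarithms and destroy the $\sqrt{k}$ scaling). Getting the Bernstein parameters sharp enough that the two additive terms $\sqrt{k\log(1/\delta)}/\epsilon^2$ and $\log(1/\delta)/\epsilon^2$ emerge — rather than the loose product $\sqrt{k}\,\log(1/\delta)/\epsilon^2$ that naive median-of-means amplification of a constant-error tester would give — is precisely the technical content of Ref.~\cite{diakonikolas2017optimal}; I would either invoke their theorem directly or reproduce this bucketed/MGF analysis.
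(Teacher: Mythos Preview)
The paper does not supply a proof of this statement: it is explicitly labeled as a restatement of Theorem~3 from Ref.~\cite{diakonikolas2017optimal} and is invoked as a black box to justify the number of samples used in the numerical TVD experiments. There is therefore no ``paper's own proof'' to compare your attempt against.

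That said, your sketch is a faithful outline of the standard route to this bound (and essentially of the argument in the cited reference): the unbiased $\ell_2^2$ statistic under Poissonization, the Cauchy--Schwarz step $\|p-u\|_2\geq 2\epsilon/\sqrt{k}$ on a $k$-point domain, and Bernstein-type concentration for the independent sum of Poisson polynomials, yielding the two additive terms $\sqrt{k\log(1/\delta)}/\epsilon^2$ and $\log(1/\delta)/\epsilon^2$ rather than the product one gets from median amplification. You also correctly flag the genuine technical content---handling heavy atoms without a union bound over coordinates---and explicitly say you would either reproduce that analysis or cite the theorem; the latter is exactly what the paper does. One minor mismatch: the paper describes the algorithm as thresholding the empirical $\ell_1$ deviation $S=\tfrac12\sum_i|\tilde p_i-1/k|$, while you analyze the chi-squared statistic $Z$. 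You acknowledge the two are related via the same $\ell_1$/$\ell_2$ comparison, but if you wanted to match the test the authors actually implement you would need the concentration argument for $S$ itself; for the purposes of establishing the stated sample complexity either statistic suffices.
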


Using this theorem with $\epsilon = 2^{-kn}$ and $k=1$, we observe that $\mathcal{O}(2^{2.5n})$ samples are sufficient for the algorithm in~\cite{diakonikolas2017optimal} to solve the decision problem. In our work we use $S$ as an estimate for the TVD, so we don't threshold its value. For each PRP construction in Table~\ref{tab:prp_constructions} we sample $10 n \times 2^{2.5}$ random instances, where the factor of $10n$ is chosen as the maximum we can achieve with our limited computational capabilities. Note that we could not analyze the second moment, $k=2$, because it requires the concatenation of 2 input bit strings (see Def.~\ref{def:kwise_prp}) leading to a number of samples scaling as $\mathcal{O}(2^{5n})$.

\begin{figure}
    \centering    
    \includegraphics[width=\linewidth]{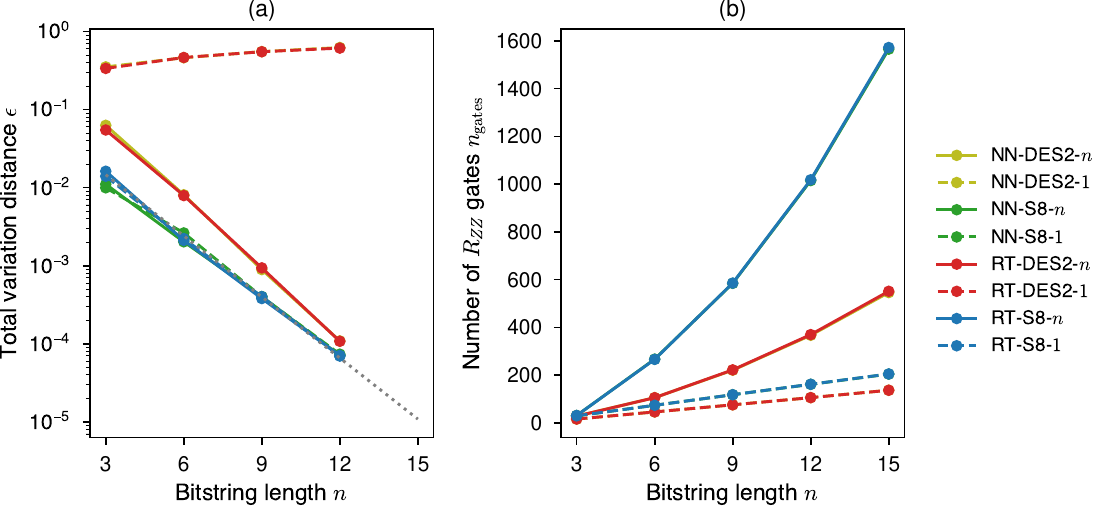} 
    \caption{\textit{Numerical analysis of pseudorandom permutations constructions.} (a) Total variation distance $\epsilon$ to the uniform distribution versus number of bits $n$. (b) Number of two-qubit $R_{ZZ}$ gates $n_\mathrm{gates}$ required for a reversible circuit implementation optimized for Quantinuum H2-1. Each point is averaged over $100$ random instances. Data corresponding to NN constructions lie directly below the corresponding RT constructions.}
    \label{fig:prp_analysis}
\end{figure}

Supplementary Fig.~\ref{fig:prp_analysis}(a) shows that $\epsilon$ decreases exponentially in $n$ for all constructions except NN-DES2-1 and RT-DES2-1. This indicates that a single layer of DES2 gates does not result in sufficient information scrambling, while a single layer of S8 gates does. With regard to the connectivity of the layers, we do not observe a significant difference between nearest-neighbors (NN) and random triplets (RT). Finally, from~\cite[Theorem 3]{gay2025pseudorandomness} it follows that NN-DES2-$n$ asymptotically provides $2^{-n}$-approximate 1-wise independent pseudorandom permutations, which is confirmed here numerically (yellow solid line). We extrapolate to $n = 15$ (dotted gray line) for RT-S8-1 and $n=36$ (not shown) which are setups used later on in hardware experiments. Finally, we shall point out a subtlety: since S8 gates are \emph{exact} permutations on three bits, we would expect the TVD for $n=3$ (green and blue lines) to be zero. This does not happen because we use a finite number of samples, albeit sufficient for our purposes due to Theorem~\ref{thm:tvd_samples}. We confirm that when using ten times more samples the TVD of S8 decreases, and that of DES2 remains the same. Since our analysis concerns upper bounds to the TVD, this fact does not affect our conclusions. 

We use the data in Supplementary Fig.~\ref{fig:prp_analysis} to choose suitable settings for executing the complement sampling game with PRP subsets on Quantinuum H-series hardware. To this end, we shall take into account also the linear cost of two-qubit gates for implementing both teleportation and swapper circuits (see~\ref{app:details_circ_compilation}). 
We choose to execute RT-DES2-$n$ with $n=12$, yielding $n_\mathrm{qubits}=41$ and an average of $n_\mathrm{gates} = 370$, as well as RT-S8-1 with $n=15$, yielding $n_\mathrm{qubits}=51$ and an average of $n_\mathrm{gates} = 205$. We also execute the game without the teleportation protocol, and with the referee and player sharing the same register. This allows us to explore much longer bit strings. We consider RT-DES2-$n$ with $n=36$, yielding $n_\mathrm{qubits}=53$ and an average of $n_\mathrm{gates} = 438$.

We compare the success probability of the quantum strategy against the value of an optimal classical strategy that can somehow exploit the approximation error in the PRP construction. This strategy achieves an increased probability of success $p_c = \frac{2^{n-1}}{2^n - 1} + \epsilon$ where $\epsilon$ is found by extrapolating data in Supplementary Fig.~\ref{fig:prp_analysis}. Note that we can convert between scores $V$ (as reported in the main text) and success probabilities $p$ via $p = \frac{V}{2} + \frac{1}{2}$. Each game is executed for 800 rounds on the H2-1 and H2-2 quantum computers.
This amounts to running 4800 quantum circuits and obtaining a single sample from each. Our setup is summarized in Supplementary Table~\ref{tab:experiment_setup_prp}. At the time of these experiments in July 2025, H2-1 had two-qubit gate infidelity of $(1.05 \pm 0.08) \times 10^{-3}$ and memory error per qubit and depth-1 circuit time of $(2.03 \pm 0.23) \times 10^{-4}$. H2-2 had two-qubit infidelity of $(1.10 \pm 0.17)\times 10^{-3}$ and memory error per qubit and depth-1 circuit time of $(2.26 \pm 0.29) \times 10^{-4}$. These were measured via randomized benchmarking of all qubits~\cite{Moses_2023,Magesan_2011,Harper_2019,hseries_benchmarking}.

\begin{table}
\centering
\small
\begin{tabular}{c|c|c|c|c|c|c} 
$n$ & PRP & $\epsilon$ & teleportation & $n_\mathrm{qubits}$ & $n_\mathrm{gates}$ avg & $n_\mathrm{gates}$ std \\ 
\hline 
12  & RT-DES2-$n$ & $1.078 \times 10^{-4}$  & yes & 41 & 370.47 & 19.96 \\ 
15  & RT-S8-1     & $1.087 \times 10^{-5}$  & yes & 51 & 205.16 & 9.45 \\ 
36  & RT-S8-1     & $3.639 \times 10^{-11}$ & no  & 53  & 438.10 & 14.72 \\
\end{tabular}
\caption{\textit{Summary of the experimental settings for PRP subset states.}} 
\label{tab:experiment_setup_prp} 
\end{table}

\begin{figure}
\centering
\includegraphics[width=\linewidth]{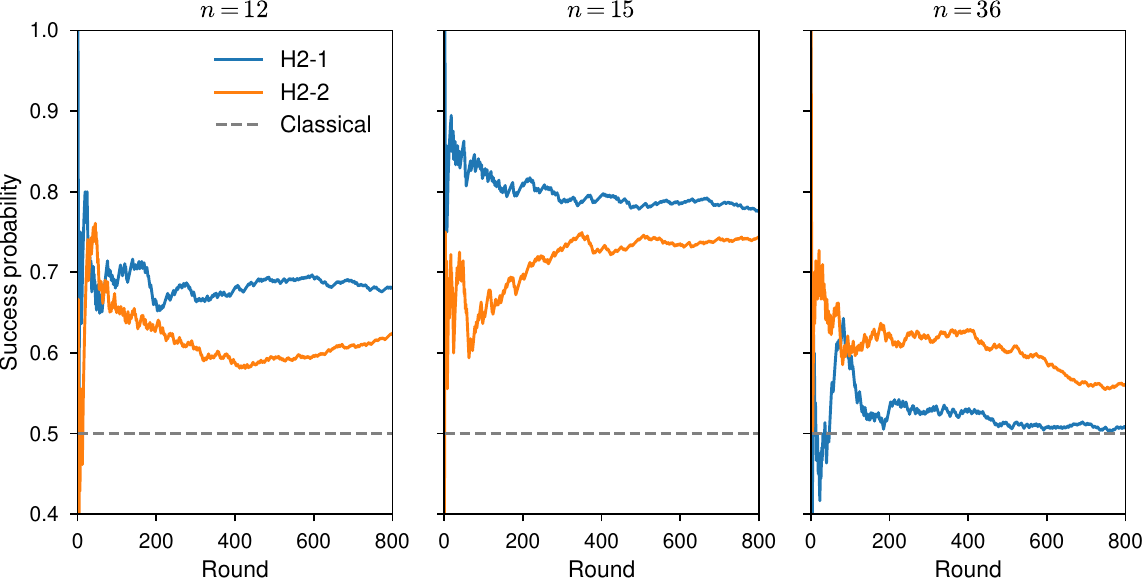}
\caption{\textit{Experimental results for PRP subset states.} Running average of the probability of success plotted against the number of rounds. The optimal classical strategy achieves the probability indicated by the gray line.}
\label{fig:experiment_results_prp}
\end{figure}

Supplementary Fig.~\ref{fig:experiment_results_prp} shows the running average of the (empirical) success probability $\hat{p}_q$ during the game. In almost all settings the quantum strategy outperforms the optimal classical strategy $p_c$ (dashed gray line) significantly. This provides evidence of a large violation of classicality, albeit conditional on the existence of PRPs or, equivalently, on the existence of one-way functions~\cite{katz2020introduction}. Noting the low performance on the $n=36$ run with H2-1, we choose to employ H2-2 in the experiments presented in the main text.

\section{Details about the implementation}
\label{app:details_circ_compilation}

\paragraph{Quantum teleportation.} A quantum communication channel is used in some experiments with both BV and PRP subset states. A simple way to simulate a quantum communication channel between two parties is to swap their qubits. This `symmetric' communication channel implies that while the referee sends a state to the player, the player simultaneously sends back a state to the referee. We opted for a more realistic and resource consuming `asymmetric' quantum communication channel. This is provided by the well-known quantum teleportation protocol~\cite{Bennett_1993}. The two parties share a set of Bell pairs, the referee performs local operations and measurements on them, and then communicates the outcomes via a classical channel. The player applies local operations on the Bell pairs accordingly, effectively recovering the referee's state. 

In the main text, Fig.~\ref{fig:overview}C, shows the textbook implementation of quantum teleportation for a single qubit in the state $\ket{S}$. From Supplementary Fig.~\ref{fig:h2_gateset_dec}, a CX can be decomposed into a $R_{ZZ}$ gate up to a few single-qubit gates. The mid-circuit measurements are followed by classical communication and correcting operations. These are implemented on the Quantinuum H2 hardware using conditional gates. For the implementation of an $n$-qubit state we simply apply this circuit to each qubit and associated ancilla.

In summary, for a state of $n$ qubits the communication between referee and player requires $n$ ancilla qubits, $2n$ two-qubit gates, $2n$ mid-circuit measurements, and $2n$ conditional single-qubit gates.

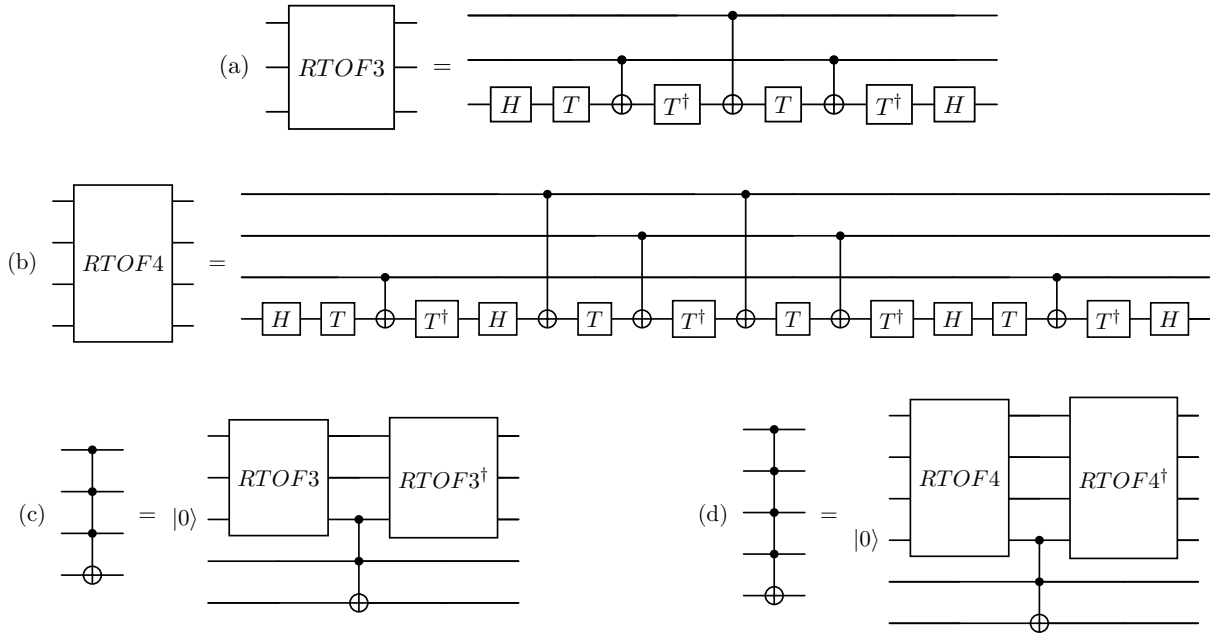
\begin{figure}
\centering
\begin{adjustbox}{width=.65\textwidth}
(a)
    \begin{quantikz}[column sep=10pt, row sep={20pt,between origins}]
    & \gate[3]{RTOF3} &  \\
    & & \\
    & & 
    \end{quantikz}
    =
    \begin{quantikz}[column sep=10pt, row sep={20pt,between origins}]
    & \qw      & \qw      & \qw      & \qw           & \ctrl{2} & \qw      & \qw      & \qw           & \qw      &  \\
    & \qw      & \qw      & \ctrl{1} & \qw           & \qw      & \qw      & \ctrl{1} & \qw           & \qw      &  \\
    & \gate{H} & \gate{T} & \targ{}  & \gate{T^\dag} & \targ{}  & \gate{T} & \targ{}  & \gate{T^\dag} & \gate{H} & 
    \end{quantikz}
\end{adjustbox}
\\ 
\vspace{.5cm}
\begin{adjustbox}{width=1\textwidth}
(b)
    \begin{quantikz}[column sep=10pt, row sep={20pt,between origins}]
    & \gate[4]{RTOF4} &  \\
    & & \\
    & & \\
    & & 
    \end{quantikz}
    =
    \begin{quantikz}[column sep=10pt, row sep={20pt,between origins}]
    & \qw   & \qw  & \qw  & \qw     & \qw   & \ctrl{3}  & \qw  & \qw  & \qw     & \ctrl{3} & \qw   & \qw   & \qw     & \qw   & \qw   & \qw   & \qw     & \qw   & \qw \\
    & \qw   & \qw  & \qw  & \qw     & \qw   & \qw   & \qw  & \ctrl{2} & \qw     & \qw  & \qw   & \ctrl{2} & \qw     & \qw   & \qw   & \qw   & \qw     & \qw   & \qw \\
    & \qw   & \qw  & \ctrl{1} & \qw     & \qw   & \qw   & \qw  & \qw  & \qw     & \qw  & \qw   & \qw   & \qw     & \qw   & \qw   & \ctrl{1}  & \qw     & \qw   & \qw \\
    & \gate{H}  & \gate{T}  & \targ{}  & \gate{T^\dagger} & \gate{H}  & \targ{} & \gate{T}  & \targ{} & \gate{T^\dagger} & \targ{} & \gate{T}  & \targ{} & \gate{T^\dagger}  & \gate{H}  & \gate{T}  & \targ{} & \gate{T^\dagger}  & \gate{H}  & \qw 
    \end{quantikz}
\end{adjustbox}
\\ 
\vspace{.5cm}
\begin{adjustbox}{width=.42\textwidth}
(c)
    \begin{quantikz}[column sep=10pt, row sep={20pt,between origins}]
    & \ctrl{1} &  \\
    & \ctrl{1} & \\
    & \ctrl{1} & \\
    & \targ{} & 
    \end{quantikz}
    =
    \begin{quantikz}[column sep=10pt, row sep={20pt,between origins}]
    & \gate[3]{RTOF3} & \qw & \gate[3]{RTOF3^\dag} & \qw \\
    & & \qw & \ctrl{1} & \qw \\
    \lstick{$\ket{0}$} & & \ctrl{1} & & \qw \\
    &  & \ctrl{1} & \qw & \qw \\
    & \qw & \targ{} & \qw & \qw 
    \end{quantikz}
\end{adjustbox}
\hspace{2cm}
\begin{adjustbox}{width=.42\textwidth}
(d)
    \begin{quantikz}[column sep=10pt, row sep={20pt,between origins}]
    & \ctrl{1} & \\
    & \ctrl{1} & \\
    & \ctrl{1} & \\
    & \ctrl{1} & \\
    & \targ{} & 
    \end{quantikz}
    =
    \begin{quantikz}[column sep=10pt, row sep={20pt,between origins}]
    & \gate[4]{RTOF4} & \qw & \gate[4]{RTOF4^\dag} & \qw \\
    &  & \qw & & \qw \\
    &  & \qw & & \qw \\
    \lstick{$\ket{0}$} & & \ctrl{1} & & \qw \\
    & \qw & \ctrl{1} & \qw & \qw \\
    & \qw & \targ{} & \qw & \qw
    \end{quantikz}
\end{adjustbox}
\caption{(a) Three-qubit Toffoli gate implemented up to a relative phase and with optimal CX count. (b) Four-qubit Toffoli gate implemented up to a relative phase. (c) Four-qubit Toffoli and (d) five-qubit Toffoli implemented using a single ancilla qubit. RTOF3 is self-adjoint, but we keep the $\dagger$ symbol as a guidance for the recursive construction of high-order Toffoli gates explained in the text. All circuits are taken from Ref.~\cite{Maslov_2016}.}
\label{fig:rtof}
\end{figure}

\paragraph{Compilation of the swapper circuit.} The swapper circuit is used in all our experiments. For subsets of size $2^{n-1}$ the optimal swapper is the $n$-qubit Grover diffusion operator, up to a phase. In the main text, Fig.~\ref{fig:overview}D, right panel, shows the high level circuit for $n=4$. Clearly the main implementation bottleneck is the multi-controlled Toffoli gate. We compiled the circuit to the native gate set of Quantinuum H-series for several values of $n$ using \texttt{pytket}. We found that the number of two-qubit gates almost exactly matches the curve $4n^2 - 4n - 0.5$. However, in our game the player is allowed to use ancilla qubits as a workspace for a more efficient compilation. While there exist a plethora of techniques that could be used to compile multi-controlled Toffoli gates, we choose to implement the method in Ref.~\cite{Maslov_2016}. In particular we utilize the following result.

\begin{theorem}[restatement of Prop.~4 in Ref.~\cite{Maslov_2016}]
A size $n\geq 4$ multi-controlled Toffoli gate can be implemented by a circuit that uses $\lceil \frac{n-3}{2} \rceil$ ancillary qubits initialized in and returned to $\ket{0}$, $8n- 17$ T gates, $6n - 12$ CX gates, and $4n - 10$ Hadamard gates.
\end{theorem}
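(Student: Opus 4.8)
The plan is to give a constructive proof: exhibit the circuit explicitly, verify it realizes the Toffoli, and then add up gate counts. The two building blocks are the relative-phase Toffoli gadgets $\mathrm{RTOF3}$ and $\mathrm{RTOF4}$ of Fig.~\ref{fig:rtof}(a)--(b) and a ``conditionally clean'' ladder of ancillas. Reading off the figures, $\mathrm{RTOF3}$ uses $4$ T, $3$ CX, $2$ H and $\mathrm{RTOF4}$ uses $8$ T, $6$ CX, $4$ H; each is its own inverse; and each, applied to a fresh $\ket{0}$ target, computes the logical AND of its remaining (control) bits into that target, up to a phase that is a function of the control bits alone. The guiding principle is that if such a gadget sets an ancilla and the same gadget later resets it, the spurious phase is undone, so the net action on the qubits we keep is exactly that of a true Toffoli.

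Concretely, for a gate with controls $c_1,\dots,c_{n-1}$ and target $t$, I would lay out $k=\ceil*{\frac{n-3}{2}}$ ancillas: set $a_1=c_1\wedge c_2$ with an $\mathrm{RTOF3}$ if $n$ is even, or $a_1=c_1\wedge c_2\wedge c_3$ with an $\mathrm{RTOF4}$ if $n$ is odd, then set each later $a_j=a_{j-1}\wedge c\wedge c'$ from the previous ancilla and two not-yet-used controls with an $\mathrm{RTOF4}$, choosing the grouping so that $a_k=c_1\wedge\dots\wedge c_{n-2}$. A single true $\mathrm{CCX}$ with controls $a_k$ and $c_{n-1}$ and target $t$ flips $t$ iff $\bigwedge_i c_i$, after which the ladder is run in reverse to return $a_1,\dots,a_k$ to $\ket{0}$. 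A short count shows that, past the first rung, each ancilla absorbs two controls, which is why exactly $k=\ceil*{\frac{n-3}{2}}$ ancillas are needed, with the parity of $n$ only deciding whether the first rung is an $\mathrm{RTOF3}$ or an $\mathrm{RTOF4}$; the base cases $n=4,5$ are precisely Fig.~\ref{fig:rtof}(c)--(d).

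For correctness I would argue by conjugation rather than by case analysis. The compute half $C$ of the ladder sends a computational basis state $\ket{c_1\cdots c_{n-1}}\ket{0^k}\ket{t}$ to $\gamma(c)\,\ket{c_1\cdots c_{n-1}}\ket{a_1\cdots a_k}\ket{t}$ for a unit phase $\gamma(c)$ that depends on the controls only (not on $t$), because every rung's phase depends only on its own controls, all of which are functions of $c$. The middle $\mathrm{CCX}$ sends $\ket{t}$ to $\ket{t\oplus(a_k\wedge c_{n-1})}=\ket{t\oplus\bigwedge_i c_i}$ without touching controls or ancillas, and $C^{-1}=C^\dagger$ then restores the ancillas while contributing the phase $\overline{\gamma(c)}$, so the total phase is $|\gamma(c)|^2=1$. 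Hence the net map is $\ket{c}\ket{0^k}\ket{t}\mapsto\ket{c}\ket{0^k}\ket{t\oplus\bigwedge_i c_i}$ with ancillas returned to $\ket{0}$: exactly the multi-controlled Toffoli. The only gadget property invoked is the ``relative-phase'' one highlighted above, which can be read off the circuits in Fig.~\ref{fig:rtof}.

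Finally, the gate counts follow by summing over rungs: for even $n$ one uses two $\mathrm{RTOF3}$'s, $2(k-1)$ copies of $\mathrm{RTOF4}$, and one $\mathrm{CCX}$ ($7$ T, $6$ CX, $2$ H in the standard decomposition), and for odd $n$ one uses $2k$ copies of $\mathrm{RTOF4}$ and one $\mathrm{CCX}$; both are linear in $n$. I would pin down the constants by checking $n=4$ and $n=5$ against Fig.~\ref{fig:rtof}(c)--(d), which give $8\cdot4-17=15$ and $8\cdot5-17=23$ T gates (and likewise for CX and H), and then the induction step $n\to n+2$, which adds one ancilla and one $\mathrm{RTOF4}$ in each half, i.e.\ $16$ T, $12$ CX, $8$ H, matching the slopes of $8n-17$, $6n-12$, $4n-10$. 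I expect the main obstacle to be not the correctness argument, which is clean once phrased by conjugation, but the bookkeeping at the even/odd boundary and for the last one or two controls, so that the totals land exactly on $8n-17$, $6n-12$, $4n-10$ rather than off by an additive constant.
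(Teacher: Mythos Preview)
Your proposal is correct and follows essentially the same route as the paper: the paper presents the identical Maslov construction as a two-step recursion (swap the outer $\mathrm{RTOF3}$'s to $\mathrm{RTOF4}$'s to add one control, then replace the middle Toffoli by the next-larger Toffoli with a fresh ancilla), and fully unrolling that recursion gives exactly your compute--$\mathrm{CCX}$--uncompute ladder, with your conjugation argument making explicit the relative-phase cancellation the paper leaves implicit. One small slip worth fixing: $\mathrm{RTOF4}$ as drawn in Fig.~\ref{fig:rtof}(b) is \emph{not} its own inverse (only $\mathrm{RTOF3}$ is, as the caption notes), but since your argument only ever uses $C$ paired with $C^\dagger$ this is harmless.
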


The author gives a constructive proof which uses as building blocks the three-qubit Toffoli (TOF3), the three-qubit Toffoli up to a relative phase (RTOF3) and the four-qubit Toffoli up to a relative phase (RTOF4) illustrated in Supplementary Fig.~\ref{fig:rtof}(a) and (b). In Supplementary Fig.~\ref{fig:rtof}(c) we show the implementation of the four-qubit Toffoli (TOF4) using a single ancilla qubit. The top two qubits are used as controls for the RTOF3 with the ancilla as target. We then apply a TOF3 and finally we undo the RTOF3. To go from TOF4 to TOF5, Supplementary Fig.~\ref{fig:rtof}(d), we simply exchange the RTOF3s to RTOF4s. For a six-qubit Toffoli (not shown) we add an ancilla and exchange the TOF3 in the middle with a TOF4. As we increase the number of qubits the recursion rule is the following: i) when going from an odd number of controls to  even, exchange the outermost RTOF3s to RTOF4s, and ii) when going from an even number of controls to odd, add an ancilla qubit and exchange the TOF3 in the middle to a TOF4.

\paragraph{Compilation of DES2 operations.} 
DES2 operations are used in our experiments with PRPs. DES2 operations map any 3-bits computational basis state $\ket{b_1, b_2, b_3}$ to $\ket{b_1, b_2, b_3 \oplus f(b_1, b_2)}$ up to arbitrary order of the output bits. The reordering is implemented by at most two SWAP gates. On the H2 devices, the SWAP gate is handled by relabeling and transporting qubits, therefore not incurring any extra cost on the two-qubit gate count~\cite{Moses_2023}. The function is defined as $f:\{0,1\}^2\to \{0,1\}$, so there are $2^{2^2}= 16$ of them. Supplementary Table~\ref{tab:des2_functions_dec} shows all possible functions along with a decomposition in terms of X, CX and CCX gates. It is known that a CCX, also known as Toffoli gate, can be decomposed into five generic two-qubit gates~\cite{DiVincenzo_1998} and that this number is necessary~\cite{Yu_2013}. For completeness we report a possible two-qubits gate decomposition using the native gate set of Quantinuum H-series hardware in Supplementary Fig.~\ref{fig:h2_gateset_dec}. We conclude that DES2 operations require five two-qubit gates in the worst case, and three in average.

\begin{table}
\centering
\small
\begin{tabular}{l|l|l}
$f(b_1, b_2)$ & decomposition & two-qubit gates\\
\hline
0 & I & 0\\
$b_1 \land b_2$ & CCX($b_1, b_2, b_3$) & 5\\
$b_1 \land \lnot b_2$ & X($b_2$)CCX($b_1, b_2, b_3$)X($b_2$) & 5\\
$b_1$ & CX($b_1, b_3$) & 1\\
$\lnot b_1 \land  b_2$ & X($b_1$)CCX($b_1, b_2, b_3$)X($b_1$) & 5\\
$b_2$ & CX($b_2, b_3$) & 1\\
$b_1 \oplus b_2$ & CX($b_1, b_3$)CX($b_2, b_3$) & 2\\
$b_1 \lor b_2$ & X($b_1$)X($b_2$)CCX($b_1, b_2, b_3$)X($b_3$)X($b_2$)X($b_1$)& 5\\
$\lnot(b_1 \lor b_2)$ & X($b_1$)X($b_2$)CCX($b_1, b_2, b_3$)X($b_2$)X($b_1$)& 5\\
$\lnot(b_1 \oplus b_2)$ & CX($b_1, b_3$)CX($b_2, b_3$)X($b_3$) & 2\\
$\lnot b_2$ & X($b_2$)CX($b_2, b_3$)X($b_2$) & 1\\
$b_1 \lor \lnot b_2$ & X($b_1$)CCX($b_1, b_2, b_3$)X($b_3$)X($b_1$)& 5\\
$\lnot b_1$ & X($b_1$)CX($b_1, b_3$)X($b_1$) & 1\\
$\lnot b_1 \lor b_2$ & X($b_2$)CCX($b_1, b_2, b_3$)X($b_3$)X($b_2$)& 5\\
$\lnot(b_1 \land b_2)$ & CCX($b_1, b_2, b_3$)X($b_3$) & 5\\
1 & X($b_3$) & 0\\
\end{tabular}
\caption{Decomposition of DES2 gates to quantum circuits in the gate set X, CX, and CCX.}
\label{tab:des2_functions_dec}
\end{table}

\begin{figure}
\centering
\begin{adjustbox}{width=.65\textwidth}
    \begin{quantikz}[column sep=10pt, row sep={20pt,between origins}]
    & \ctrl{1} &  \\
    & \targ{} &
    \end{quantikz}
    =
    \begin{quantikz}[column sep=0.25cm, row sep={1cm, between origins}]
    &  & \gate[2]{R_{ZZ}(\frac{\pi}{2})} & & \gate{R_Z(\frac{\pi}{2})} &\\
    & \gate{U(\frac{5\pi}{2}, \frac{\pi}{2})} & & \gate{U(\frac{\pi}{2}, 0)} & \gate{R_Z(\frac{3\pi}{2})} &
    \end{quantikz}
\end{adjustbox}
\\ 
\vspace{.65cm}
\begin{adjustbox}{width=1\textwidth}
    \begin{quantikz}[column sep=0.2cm, row sep={1cm, between origins}]
    & \ctrl{1} &  \\
    & \ctrl{1} & \\
    & \targ{} & 
    \end{quantikz}
    =
    \begin{quantikz}[column sep=0.2cm, row sep={1cm, between origins}, transparent]
    & & & & \gate[3, label style={yshift=0.3cm}]{R_{ZZ}(\frac{\pi}{2})} & & & & \gate[3, label style={yshift=0.3cm}]{R_{ZZ}(\frac{\pi}{2})} & \gate{U(\pi, \frac{\pi}{4})} & \gate[2]{R_{ZZ}(\frac{\pi}{4})} & \gate{U(\pi, 1.75\pi)} & \gate{R_Z(\frac{\pi}{4})} & \\
    & & \gate[2]{R_{ZZ}(\frac{\pi}{2})} & & \linethrough & &  \gate[2]{R_{ZZ}(\frac{\pi}{2})} & & \linethrough & & & & \gate{R_Z(3.25\pi)} &\\
    & \gate{U(\pi, \frac{3\pi}{2})} & & \gate{U(\frac{\pi}{4}, \frac{\pi}{2})} & & \gate{U(\frac{\pi}{4}, 0)} & & \gate{U(\frac{\pi}{4}, \frac{3\pi}{2})} & & \gate{U(3.25\pi,\pi)} & & & \gate{R_Z(\pi)} &
    \end{quantikz}
\end{adjustbox}
\caption{Decomposition of CX (top) and CCX (bottom) gates using the native gate set of the Quantinuum H-series hardware. Here $R_{ZZ}(\theta) = e^{-i \frac{\theta}{2} Z\otimes Z}$, $R_{Z}(\theta) = e^{-i \frac{\theta}{2} Z}$ and $U(\theta, \phi) = e^{-i\frac{\theta}{2} (\cos(\phi)X + \sin(\phi)Y)}$.}
\label{fig:h2_gateset_dec}
\end{figure}
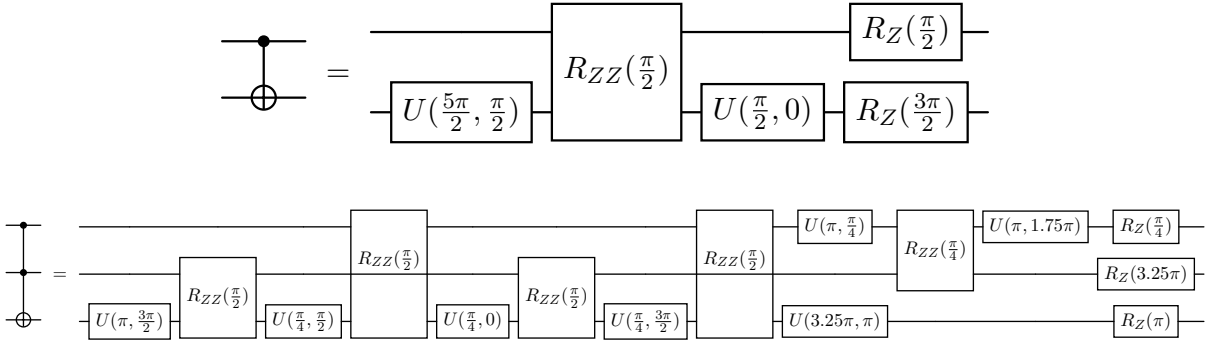

\paragraph{Compilation of S8 operations.}
S8 operations are only used in our experiments with PRPs. S8 operations permute the computational basis states of 3 bits. We approach the problem of compiling these operations numerically by brute-force. We use the \texttt{pytket} software package~\cite{Sivarajah_2021} and compile for the native Quantinuum H-series gate set. For each of the $8! = 40320$ permutations we construct the corresponding unitary using the \texttt{ToffoliBox} tool in \texttt{pytket}. We then compile each unitary to a quantum circuit and optimize it with the highest level of optimization (three), which consists of a collection of gate reduction techniques. Supplementary Fig.~\ref{fig:S8_compiled} shows the histogram of how many permutations yielded a certain number of two-qubit $R_{ZZ}(\theta)$ gates. We find that S8 operations require at most 18 two-qubit gates, and 10 on average.

\begin{figure}
    \centering    
    \includegraphics[width=.5\linewidth]{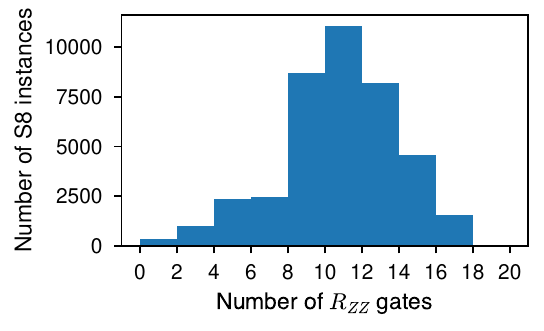}
    \vspace{-.25cm}
    \caption{Histogram of S8 instances that require a certain number of $R_{ZZ}(\theta)$ two-qubit gates.}
    \label{fig:S8_compiled}
\end{figure}

\end{document}